\newtheorem{theorem}{Theorem}
\begin{document}

\title{Reversible Quantization Index Modulation for Static Deep Neural Network Watermarking}

\author{Junren Qin, Shanxiang Lyu, Fan Yang, Jiarui Deng, Zhihua Xia, Xiaochun Cao
\IEEEcompsocitemizethanks{\IEEEcompsocthanksitem Junren Qin, Shanxiang Lyu, Fan Yang, Jiarui Deng and Zhihua Xia are with the College of Cyber Security, Jinan University, Guangzhou 510632, China (E-mail: lsx07@jnu.edu.cn).  Xiaochun Cao is with the School of Cyber Science and Technology, Sun Yatsen University, Shenzhen Campus, Shenzhen 518107, China (E-mail: caoxiaochun@mail.sysu.edu.cn).
}
}


\IEEEtitleabstractindextext{%

\begin{abstract}
	Static deep neural network (DNN) watermarking techniques typically employ irreversible methods to embed watermarks into the DNN model weights. However, this approach causes permanent damage to the watermarked model and fails to meet the requirements of integrity authentication. Reversible data hiding (RDH) methods offer a potential solution, but existing approaches suffer from weaknesses in terms of usability, capacity, and fidelity, hindering their practical adoption.
	In this paper, we propose a novel RDH-based static DNN watermarking scheme using quantization index modulation (QIM). Our scheme incorporates a novel approach based on a one-dimensional quantizer for watermark embedding. Furthermore, we design two schemes to address the challenges of integrity protection and legitimate authentication for DNNs. Through simulation results on training loss and classification accuracy, we demonstrate the feasibility and effectiveness of our proposed schemes, highlighting their superior adaptability compared to existing methods.
\end{abstract}

\begin{IEEEkeywords}
deep neural network (DNN), watermarking, reversible data hiding (RDH).
\end{IEEEkeywords}}

\maketitle
\IEEEdisplaynontitleabstractindextext
\IEEEpeerreviewmaketitle
 
\section{Introduction}\label{sec:introduction}
\IEEEPARstart{D}{eep} neural networks (DNNs) have gained significant popularity due to their remarkable performance and have found applications in various fields \cite{montavon2018methods,sze2017efficient,szegedy2013deep,samek2021explaining,radhakrishnan2023wide,tang2021steganography,lou2022112333}. However, the increasing use of deep learning-based systems also poses a risk of unauthorized usage or modification of DNN models without proper attribution to the original authors. To address this concern, watermarking techniques for DNNs have emerged as an important step in protecting the intellectual property embedded in these models \cite{barni2021challenge}. Watermarking provides an additional layer of security that allows the original authors to prove ownership of their models, safeguard them from unauthorized access and use, track their provenance, ensure integrity, facilitate versioning, and identify malicious models \cite{regazzoni2021protecting}.

Deep neural network (DNN) watermarking techniques can be broadly categorized into static and dynamic watermarking approaches \cite{li2021survey,zhang2018protecting}, depending on where the watermark can be read from. In static watermarking methods (e.g., \cite{uchida2017embedding, kuribayashi2021white, li2021spread, pagnotta2022tattooed}), the watermark can be directly extracted from the network weights. During the training phase, these weights are determined and typically represented in floating-point formats, which differ from the popular unsigned-integer format commonly used for images. Static watermarking techniques aim to embed the watermark directly into the weights of the DNN, ensuring that the ownership and integrity of the model can be verified by examining these weight values. Static watermarking is particularly relevant in scenarios where the protection of the model's weights and ownership verification are of utmost importance. On the other hand, dynamic watermarking techniques (e.g., \cite{li2021payloadwatermarking, fei2022GanWatermarking}) rely on the modification of the network's behavior when provided with specific inputs, resulting in a visible watermark in the model's output. By carefully designing the input signals or modifying the network's architecture, dynamic watermarking allows for the extraction of the watermark through the observation of specific output patterns. Dynamic watermarking techniques offer a more flexible approach by embedding the watermark in the network's behavior rather than its weights. This enables the watermark to be extracted from the model's output, making it suitable for applications where the focus is on detecting unauthorized usage or tracking the dissemination of the model.
 
An intriguing research direction is the development of reversible watermarking schemes for DNNs. Reversible watermarking is a type of digital watermarking that enables content owners to protect their digital data without causing any permanent modifications \cite{shi2016reversible, DBLP:journals/sigpro/HuaHSGT16}. It allows embedded information to be retrieved from the host object without any data loss or damage. Reversible watermarking algorithms have been successfully applied to the unsigned integer format commonly used in images, including techniques such as difference expansion (DE) \cite{1227616}, prediction-error expansion (PEE) \cite{1421361, WU2014387}, and histogram shifting (HS) \cite{ni2006reversible}. Considering that the weights in DNNs can be treated as conventional multimedia objects, reversible watermarking of DNNs can be seen as an extension of static DNN watermarking, where reversible watermarks are embedded within the weights. However, existing approaches, such as the HS-based method proposed by Guan \textit{et al.} \cite{Guan_2020} for watermarking convolutional neural networks (CNNs), face challenges when dealing with floating-point weights and suffer from degradation when the host exhibits a uniform or uniform-like distribution.

Motivated by these challenges, we propose a novel reversible watermarking scheme specifically tailored for floating-point weights in DNNs. Our contributions, along with their highlights, are summarized as follows:

\begin{itemize}
	\item First, we design a simple yet efficient reversible watermarking algorithm, named reversible quantization index modulation (R-QIM), which improves upon the widely used quantization index modulation (QIM) \cite{chen2001quantization,moulin2005data,lyu2023optimized,Feng2016}. R-QIM allows for reversible embedding of watermarks in floating-point or real-valued objects, resembling a lattice quantizer that maps input values from a large continuous set to a countable smaller set with a finite number of elements. While QIM is naturally lossy, we leverage the availability of the cover object during the watermark embedding process to add a scaled version of the difference vector back to the quantized output values, enabling reversibility.
	\item Second, we demonstrate how R-QIM can be deployed in DNN watermarking to achieve integrity protection and legitimacy authentication. For integrity protection, our scheme allows the owner or a trusted third-party institution to verify the occurrence of data tampering, regardless of noiseless or known noisy channel conditions. This addresses the limitations of existing schemes that are unavailable in noisy channel transmission. For legitimacy authentication, our proposed scheme provides an effective means to differentiate between legal and illegal use of target DNNs. This added layer of protection helps deter attackers and facilitates the identification of individuals responsible for unauthorized use. Additionally, it provides assurance that a given DNN is authentic, ensuring the integrity of the produced data.
	\item Third, we provide theoretical justifications and conduct numerical simulations to showcase the advantages of R-QIM. We analyze the signal-to-watermark ratio (SWR) of R-QIM, which measures capacity and fidelity, and compare the training loss and classification accuracy of R-QIM with the HS-based method \cite{Guan_2020} by analyzing the weights of multi-layer perceptron (MLP) and visual geometry group (VGG) models.
\end{itemize}

The remainder of the paper is organized as follows. Section \ref{Sec. related work} introduces DNN watermarking models and existing algorithms. Sections \ref{Sec. Mathmetical Model} and \ref{Sec. QY-RW scheme} present R-QIM along with theoretical analyses and its applications in DNN watermarking. Section \ref{Sec. experiment} provides simulation results, and Section \ref{Sec. conclusion} concludes the paper.

\section{Preliminaries}\label{Sec. related work}

\subsection{Reversible DNN Watermarking Basics}
Reversible deep neural network (DNN) watermarking involves the embedding of a watermark into the weights of a DNN model in a manner that allows for its extraction without any permanent modifications or loss of information. This reversible embedding process is analogous to static DNN watermarking, where the watermark is embedded directly into the network weights during the training phase \cite{uchida2017embedding,li2021survey}. However, reversible watermarking techniques ensure that the original weights can be perfectly recovered after the watermark is extracted.

The mathematical model for reversible DNN watermarking can be described as follows. Let $\mathbf{W}$ denote the set of all weights in a trained DNN model. During watermark embedding, specific weights from $\mathbf{W}$ are selected based on a location sequence $\mathbf{c}$ guided by a clue or key $cl$, resulting in a cover sequence $\mathbf{s}$. The information sequence $\mathbf{m}$ is then embedded into $\mathbf{s}$ using a carefully designed embedding function $\textrm{Emb}(\cdot)$, resulting in the watermarked sequence $\mathbf{s}_w$. 

To ensure correct extraction and recovery of the watermark, the following triplet of operations is applied:
\begin{equation}
	\left\{ \begin{aligned}
		\mathbf{s}_{w} &= \textrm{Emb}(\mathbf{s}, \mathbf{m}) \\
		\hat{\mathbf{m}} &= \textrm{Ext}(\mathbf{s}_w + \mathbf{n}) = \textrm{Ext}(\mathbf{y}) \\
		\hat{\mathbf{s}} &= \textrm{Rec}(\mathbf{s}_w + \mathbf{n}) = \textrm{Rec}(\mathbf{y}) \\
	\end{aligned}\right.
\end{equation}
where  $\textrm{Emb}(\cdot)$ represents the embedding function that embeds the information sequence $\mathbf{m}$ into the cover sequence $\mathbf{s}$ to produce the watermarked sequence $\mathbf{s}_w$. $\textrm{Ext}(\cdot)$ and $\textrm{Rec}(\cdot)$ denote the extraction and recovery functions, respectively. $\mathbf{n}$ represents the additive noise present in the received watermarked sequence $\mathbf{y} = \mathbf{s}_w + \mathbf{n}$.

While reversible DNN watermarking shares similarities with reversible image watermarking, there are notable differences in terms of the cover format, robustness, and fidelity requirements. Table \ref{Tab. different requirements} summarizes the key differences between reversible image watermarking and reversible DNN watermarking. Reversible DNN watermarking operates on floating-point weights, which differ from the unsigned integers typically used in reversible image watermarking. The fidelity requirement in reversible DNN watermarking pertains to the effectiveness of the host network after watermark embedding, rather than the visual quality of the host signal as in image watermarking. Additionally, reversible DNN watermarking should have the capacity to embed a large amount of data or information into the network weights. Security is crucial to prevent unauthorized parties from accessing, reading, or modifying the watermark. Lastly, efficiency is important to ensure faster embedding and extraction processes for DNN watermarking algorithms.

By understanding the unique characteristics and requirements of reversible DNN watermarking, we can develop tailored algorithms and techniques that enable the embedding, extraction, and recovery of watermarks while preserving the integrity and effectiveness of the DNN models.

\begin{table*}[t!]
	\centering
	\renewcommand\arraystretch{1.5}
	\caption{Comparison of reversible image watermarking and reversible DNN watermarking.}
	\label{Tab. different requirements}
	\begin{tabular}{p{.1\textwidth}|| p{.4\textwidth} p{.4\textwidth}}
		\hline 
		Features & Reversible image watermarking& Reversible DNN watermarking\\
		\hline		
		Format of covers & Unsigned integers & Floating-point numbers \\		
		Fidelity & Higher quality of the host signal after watermark embedding& Higher effectiveness of the host network after watermark embedding \\
		Capacity&\multicolumn{2}{p{.8\textwidth}}{Ability to embed a watermark with a
			massive amount of data/information}\\
		Security&\multicolumn{2}{p{.8\textwidth}}{Ability to remain secret from unauthorized parties accessing, reading, and modifying the watermark}\\
		Efficiency&\multicolumn{2}{p{.8\textwidth}}{Higher speed for the embedding and extraction process of the watermarking algorithm}\\
		\hline
	\end{tabular}
\end{table*}

\subsection{Existing Methods}
\subsubsection{HS}
HS (Histogram Shifting) is a reversible watermarking algorithm originally developed for images, but it has been adapted for use in CNNs \cite{Guan_2020}. The method consists of three main parts: host sequence construction, data preprocessing, and the watermarking algorithm.

In the host sequence construction, a host matrix is constructed from a convolutional layer in the CNN. This step is not directly relevant to this paper and will not be discussed further. In the data preprocessing step, each weight is defined as follows:

\begin{equation}
	\omega=\pm 0.\underbrace{00...0}_{p~\text{digits}}n_{1}n_{2}...n_{c}n_{c+1}...n_{q},
\end{equation}
Here, $q$ represents the total length of digits for the weight. To meet the requirements of an integer host, the consecutive non-zero digit pairs $(n_{c},n_{c+1})$ in $\omega$, corresponding to the minimum entropy, are chosen as the significant digit pairs to construct the host sequence. These chosen pairs are then adjusted by adding an adjustable integer parameter $V$ to ensure they fall within the appropriate range of $[-99,99]$.

For the watermarking algorithm, HS \cite{ni2006reversible} scheme is employed as the embedding and extraction strategy. The 1-bit HS embedding process for the watermark $m$ can be described as follows:

\begin{equation}
	\omega^{'}=\left\{
	\begin{aligned}
		&\omega+m,&\omega=\Omega_{\mathrm{max}}\\
		&\omega+1,&\omega\in\left(\Omega_{\mathrm{max}},\Omega_{\mathrm{min}}\right)\\
		&\omega,&\omega\notin\left[\Omega_{\mathrm{max}},\Omega_{\mathrm{min}}\right)\\
	\end{aligned}\right..
\end{equation}
The histogram shifting operation in this 1-bit embedding process is depicted in Fig. \ref{fig_HS_shift}(a), where the bins greater than $\Omega_{\mathrm{max}}$ are shifted to the right by a fixed $\Delta=1$ to create a vacant bin for embedding. The watermark $m$ with a uniform distribution is then embedded into the bin equal to $\Omega_{\mathrm{max}}$ using HS. This divides the entire cover into three regions, as depicted in Fig. \ref{fig_HS_shift}(c): region i for covers smaller than $\Omega_{\mathrm{max}}$, region ii for covers equal to $\Omega_{\mathrm{max}}$, and region iii for covers larger than $\Omega_{\mathrm{max}}$. The mapping rule for $\omega$ changes depending on the bit, as shown in Fig. \ref{fig_HS_shift}(b).

Using the same process of host sequence construction and data preprocessing, the extraction process can be described as follows:

\begin{equation}
	\hat{m}=\left\{
	\begin{aligned}
		&1,&\omega^{'}=\Omega_{\mathrm{max}}+1\\
		&0,&\omega^{'}=\Omega_{\mathrm{max}}\\
	\end{aligned}\right.,
\end{equation}
and the recovery process as:
\begin{equation}
	\hat{\omega}=\left\{
	\begin{aligned}
		&\omega^{'}-1,&\omega\in\left(\Omega_{\mathrm{max}},\Omega_{\mathrm{min}}\right)\\
		&\omega^{'},&\omega\notin\left[\Omega_{\mathrm{max}},\Omega_{\mathrm{min}}\right)\\
	\end{aligned}\right..
\end{equation}

\begin{figure}[t!]
	\centering
	\subfigure[Histogram shifting.]{\includegraphics[width=.42\textwidth]{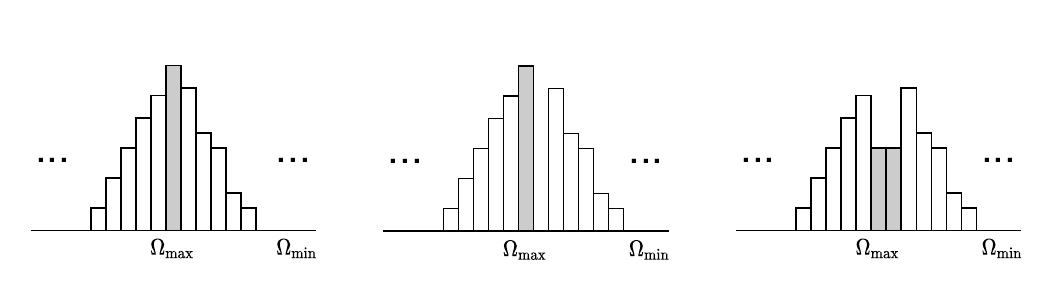}} 
	\subfigure[The mapping rule.]{\includegraphics[width=.42\textwidth]{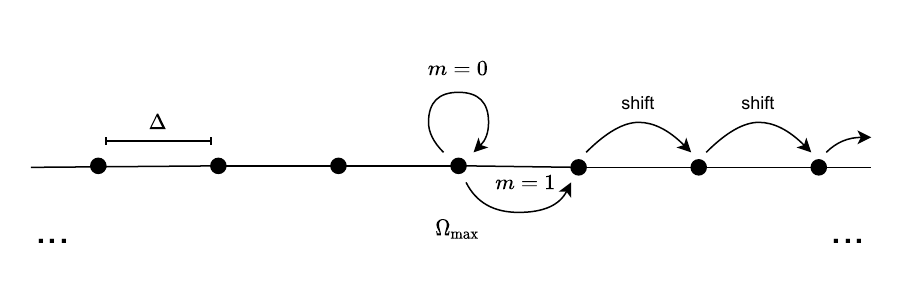}}
	\subfigure[The three partitioned regions.]{\includegraphics[width=.42\textwidth]{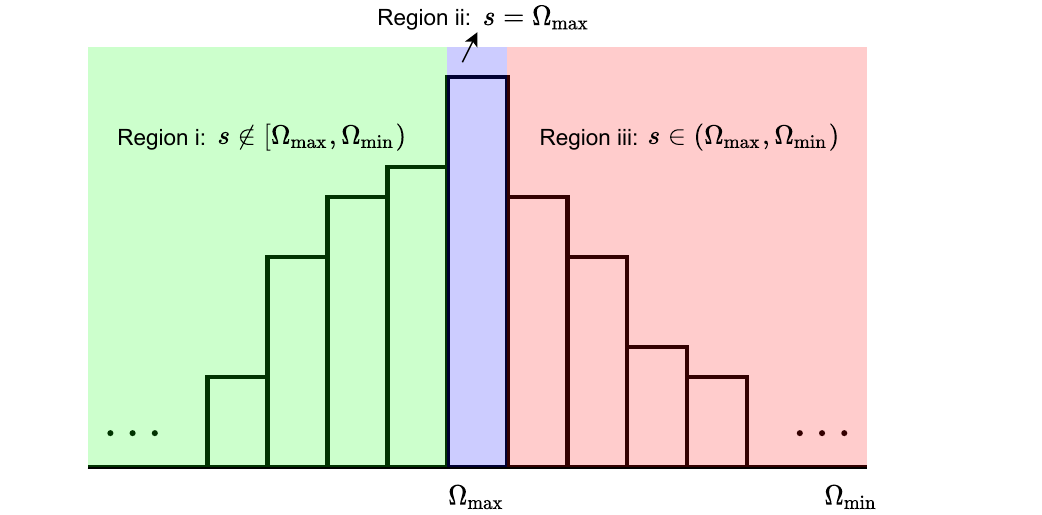}} 
	\caption{Illustration of the HS algorithm.} 
	\label{fig_HS_shift}
\end{figure}

\subsubsection{QIM}\label{Sec.QIM}
QIM (Quantization Index Modulation) is a widely used method for non-reversible watermarking \cite{chen2001quantization,moulin2005data,lyu2023optimized,Feng2016}. Its rationale can be explained using the example shown in Fig. \ref{Fig. QIM embed model}(a). The circle and cross positions in Fig. \ref{Fig. QIM embed model}(a) represent two sets, $\Lambda_0$ and $\Lambda_1$, arranged alternately. Given a host or cover sample $s\in\mathbb{R}$ and a one-bit message $m \in \{0,1\}$, the watermarked value is obtained by moving $s$ to the nearest point in $\Lambda_0$ when $m=0$, and to the nearest point in $\Lambda_1$ when $m=1$.

Let $Q_{\Delta}(s)=\Delta\lfloor s/\Delta\rfloor$ be a quantization function with $\Delta$ as the step-size parameter. The embedding process can be described as follows:
\begin{equation}\label{Eq. conventional QIM embedding}
	s_{\mathrm{QIM}}\triangleq Q_m(s)=Q_{\Delta}(s-d_m)+d_m,~m\in \{0,1\},
\end{equation}
where $d_0=-(\Delta/4)$, $d_1=\Delta/4$, $\Lambda_0=d_0+\Delta\mathbb{Z}$, and $\Lambda_1=d_1+\Delta\mathbb{Z}$.

Assuming that the transmitted $s_{\mathrm{QIM}}$ has been contaminated by an additive noise term $n$, the received signal is given by $y= s_{\mathrm{QIM}}+n$. A minimum distance decoder is used to extract the watermark as follows:

\begin{equation}\label{Eq. QIM extract}
	\hat{m}=\mathop{arg\min}_{m\in\{0,1\}}\left[\mathop{\min}_{s\in\Lambda_m}|y-s|\right].
\end{equation}
If $|n|<\Delta/4$, the estimated value $\hat{m}$ is correct.

In terms of embedding distortion, as shown in Fig. \ref{Fig. QIM distortion}(a), the maximum error caused by embedding is $\Delta/2$. If the quantization errors are uniformly distributed over $[-(\Delta/2),(\Delta/2)]$, the mean-squared embedding distortion is given by $D=\Delta^2/12$. Considering the capacity, QIM achieves an approximate rate of 1 bpps (bit per sample), which means each sample of the host cover can carry 1 bit of watermark information.

\begin{figure}[t!]
	\centering
	\includegraphics[width=.5\textwidth]{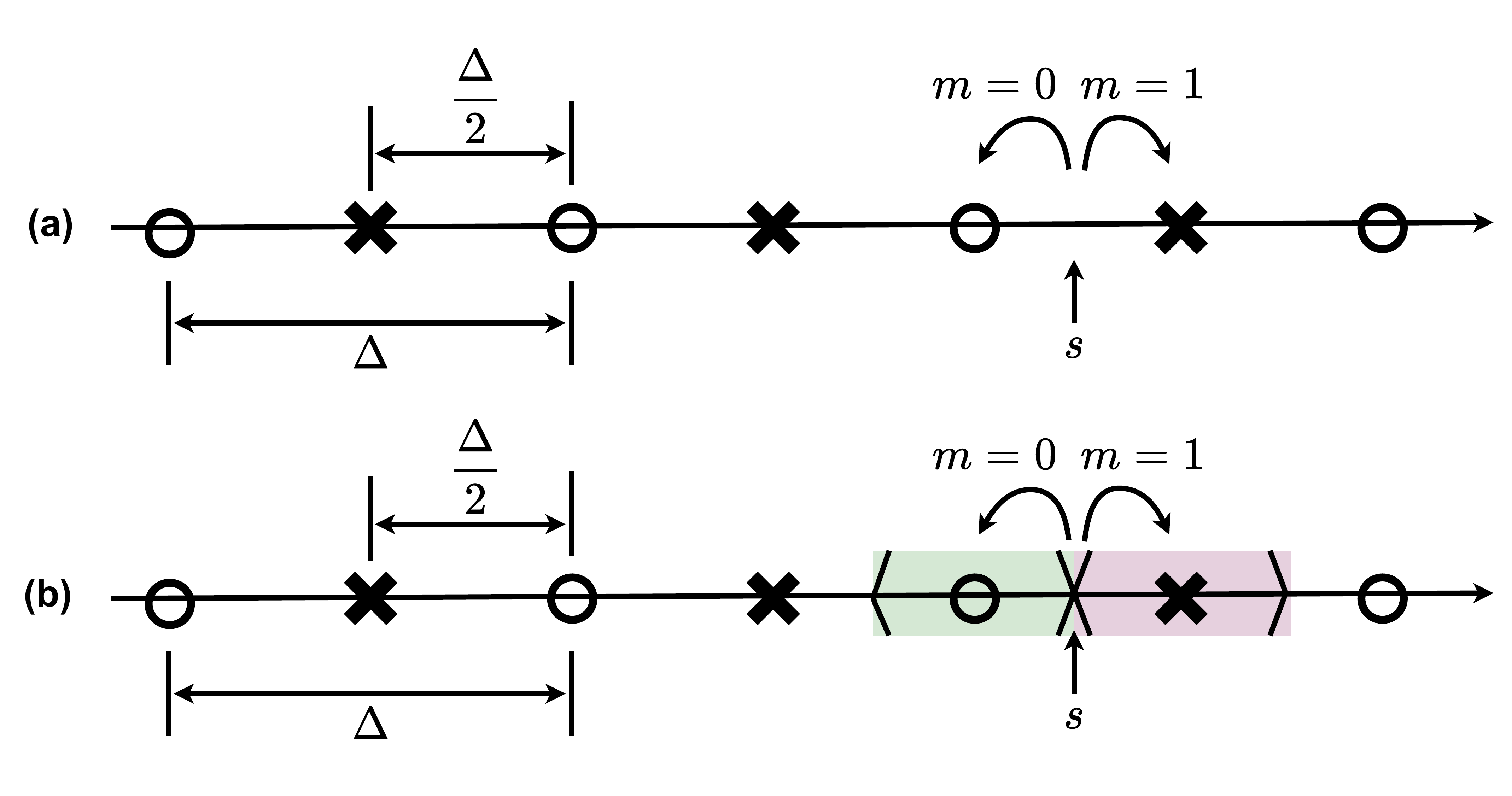} 
	\caption{Embed one bit into a sample with different versions of QIM. (a) Conventional QIM. (b) Reversible QIM.} 
	\label{Fig. QIM embed model}
\end{figure}

\begin{figure}[t!]
	\centering
	\includegraphics[width=.42\textwidth]{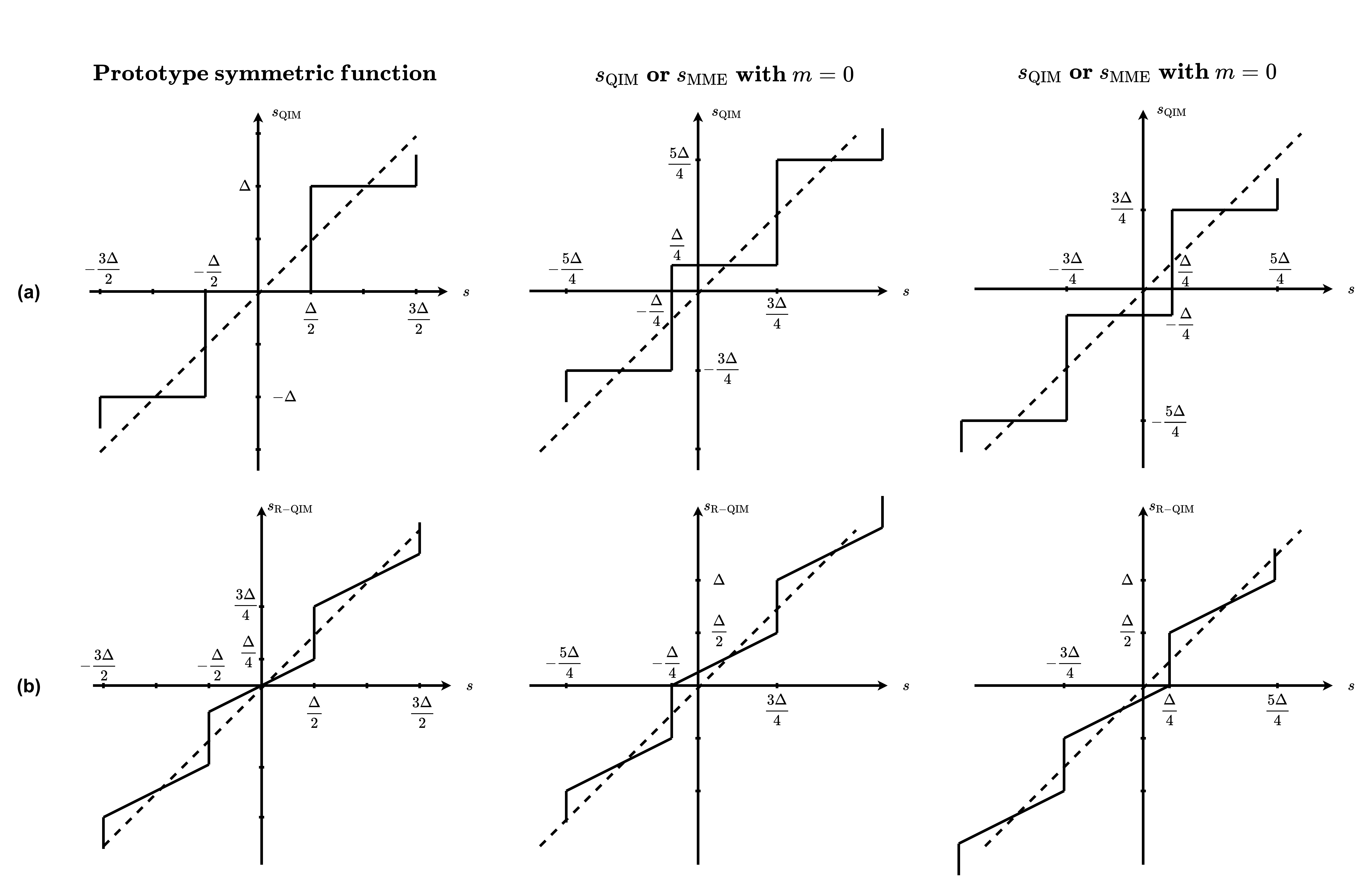} 
	\caption{Selection of watermarked signal with given $s$ and $m \in \{0,1\}$ for Prototype symmetric function, $x=Q_m(s)$ with $m = 0$, and $x=Q_m(s)$ with $m = 1$ in different one-bit watermarking. (a) Conventional QIM. (b) Reversible QIM with $\alpha=0.5$.} 
	\label{Fig. QIM distortion}
\end{figure}
\section{THE PROPOSED METHOD}\label{Sec. Mathmetical Model}

In this section, we introduce a QIM-based RDH (Reversible Data Hiding) algorithm called reversible QIM (R-QIM) and highlight its advantages compared to the method proposed in \cite{Guan_2020}.

\subsection{R-QIM}\label{SSec. QY-RW}

We observe that there exists a quantization error $e$ between the cover vector $s$ and its quantized watermarked vector $Q_{(m,k)}(s)$, given by:

\begin{align}
	e =  s - Q_m(s).
\end{align}

If we only use $Q_m(s)$ as the watermarked vector, the information about $e$ is lost. However, QIM has a certain error tolerance capability. If we consider $e$ as "beneficial noise" and add it back to $Q_m(s)$, we can maintain the information about the cover $s$, making the scheme reversible. The challenge lies in properly scaling the "beneficial noise" to meet specific requirements. First, the scaled $e$ should be small enough to stay within the correct decoding region. Second, the scaled $e$ should not be too small to avoid exceeding the representation accuracy of numbers.

The method that incorporates these ideas is called R-QIM. Its embedding operation is defined as:

\begin{equation}\label{Eq. QY-RW embedding}
	s_{\rm{R-QIM}}\triangleq \alpha Q_{(m,k)}(s)+(1-\alpha)s,
\end{equation}
where $\alpha$ represents a scaling factor that satisfies $\alpha\in \left(\frac{|\mathcal{M}|-1}{|\mathcal{M}|},1\right)$, $Q_{(m,k)}(s)$ is an encrypted quantizer defined as:

\begin{equation}\label{Eq. QY-RW quantizer}
	Q_{(m,k)}(s)\triangleq Q_{\Delta}(s-d_m-k)+d_m+k, m\in \mathcal{M}.
\end{equation}
In Eq. (\ref{Eq. QY-RW quantizer}), $Q_{\Delta}(s)$ denotes the same $Q_{\Delta}(s)=\Delta\lfloor s/\Delta\rfloor$ used in conventional QIM, and $k$ represents a dithering component for secrecy. R-QIM can be considered a fast version of the lattice-based method proposed in \cite{qin2022lattice}.

The parameters $k$ and $\alpha$ are typically treated as secret keys in a watermarking scheme. By setting $k=0$ and $\alpha=0.5$, we can achieve a 1-bit embedding example of R-QIM as depicted in Fig. \ref{Fig. QIM embed model}(b). In this case, the watermarked covers are distributed in the green and red zones around the circle and cross positions, rather than on the positions themselves.

For the receiver, the estimated watermark can be extracted from the received signal $y$ using the following equation:
\begin{equation}\label{Eq. QY-RW extracting}
	\hat{d_m}\equiv Q_{\frac{\Delta}{|\mathcal{M}|}+k}(y)= \left[Q_{\frac{\Delta}{|\mathcal{M}|}}(y-k)+k\right]\{\Delta\}.
\end{equation}

If the noise term $n$ is small enough to satisfy the condition:
\begin{equation}\label{Eq. QY-RW correct decoding}
Q_{\frac{\Delta}{|\mathcal{M}|}+k}(n)=0,
\end{equation}
the correct extraction $\hat{d_m}=d_m$ is achieved, whether in a noiseless or noisy channel.

To estimate the original weight $s$ from the received signal $y$, we use the following equation:
\begin{equation}\label{Eq. QY-RW recovering}
\hat{s}=\frac{y-\alpha Q_{\frac{\Delta}{|\mathcal{M}|}+k}(y)}{1-\alpha}.
\end{equation}
The correct restoration $\hat{s}=s$ occurs if and only if $n=0$ such that $y=s_{\rm{R-QIM}}$. In the presence of noise, the estimation error is given by:

\begin{equation}\label{Eq. QY-RW recovering difference}
\hat{s}-s=\frac{n}{1-\alpha}.
\end{equation}

By setting $\alpha=0.5$ and $k=0$, the embedding distortion is depicted in Fig. \ref{Fig. QIM distortion}(b), with a maximum error of $\alpha\Delta/2=\Delta/4$. If the quantization errors are uniformly distributed over $[-(\Delta/2),(\Delta/2)]$, the mean-squared embedding distortion is:

\begin{equation}\label{Eq. MSE_proposed}
D=\frac{\alpha}{12}\Delta^2.
\end{equation}

Since $\bigcup_{m=0}^{|\mathcal{M}|-1}\Lambda_m=\mathbb{R}$ (as shown in Fig. \ref{Fig. QIM embed model}), each bit of the watermark can be embedded into a host sample with any characteristic and distribution. These features make R-QIM capable of accommodating a watermark of almost the same maximum length as the number of host samples.

\subsection{Discussions}\label{SSec. Theoretical analysis}

In this section, we compare the R-QIM algorithm with the HS algorithm proposed in \cite{Guan_2020} and discuss their respective advantages in terms of usability, capacity, and imperceptibility.

First, let's consider usability. The HS algorithm is not suitable for RDH-based static DNN watermarking due to two main reasons. Firstly, it mismatches the host of uniform distribution, which makes the watermarked sequence exhibit obvious statistical characteristics. This vulnerability makes the algorithm defenseless against passive attacks. Secondly, the low capacity of the HS algorithm becomes even worse when applied to uniformly distributed hosts. Therefore, HS is not feasible for static DNN watermarking, as the data preprocessing operation makes the host sequence uniform rather than normally distributed. We conducted experiments to verify this by preprocessing different randomly generated data of normal distribution, testing them multiple times for skewness, kurtosis, and Kolmogorov-Smirnov (K-S) tests, and plotting the Quantile-Quantile (Q-Q) plot for one of the test results. The results (Fig. \ref{fig_ta_qq_plot}) clearly show that the preprocessed data becomes flatter and deviates from a normal distribution according to the K-S test. Figures \ref{fig_ta_qq_plot}(d), (e), and (f) demonstrate that the preprocessed data follows a uniform distribution. Thus, HS \cite{Guan_2020} lacks practical usability, while R-QIM is feasible for data admitting any distribution.

Next, let's analyze the theoretical advantages of R-QIM in terms of capacity and imperceptibility compared to HS. To evaluate the embedding capacity, we consider a host sequence of length $L$ and analyze the maximum available watermark length $C_{\rm{max}}$ for both R-QIM and HS. In the HS algorithm, the watermark is only embedded into the bin where $s=\Omega_{\mathrm{max}}$, and the other bins do not contain any information about the watermarks.  Recall that Regions i, ii and iii are shown in  
Fig. \ref{fig_HS_shift}(c).
The maximum length of the available watermark in HS can be calculated as:
\begin{equation}
	C_{\rm{max},HS}=\mathrm{Pr}(X \in \mathrm{Region~ii}) \cdot L.
\end{equation}
On the other hand, in R-QIM, the entire host sequence can be used to embed the watermark, resulting in $C_{\rm{max},R-QIM}=L$. It is evident that for host sequences of the same length, R-QIM has a higher embedding capacity.

In terms of embedding distortion or imperceptibility, we define the signal-to-watermark ratio (SWR) as a measure. The SWR is defined as:
\begin{align}\label{Eq. SWR1}
	\mathrm{SWR}~(\mathrm{dB})=10\times \log\left(\frac{\sigma^2_\mathbf{s}}{\sigma^2_\mathbf{w}}\right),
\end{align}
where $\sigma^2_\mathbf{s}$ and $\sigma^2_\mathbf{w}$ represent the power of the host and the additive watermark, respectively. A smaller value of $\sigma^2_\mathbf{w}$ indicates a higher SWR, which implies better imperceptibility. To analyze the embedding distortion fairly, we assume the same capacity and host distribution for both HS in \cite{Guan_2020} and R-QIM, corresponding to embedding the watermark into the host $\Omega_{\mathrm{max}}$ which follows a Gaussian distribution.  

Regarding the embedding distortion, we have the following result:

\begin{theorem}\label{The. greater SWR}
	R-QIM achieves a larger SWR than HS when $\Delta\leq\sqrt{3}$.
\end{theorem}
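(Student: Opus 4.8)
The plan is to reduce the SWR comparison to a comparison of watermark powers. Since the SWR in (\ref{Eq. SWR1}) is an increasing function of $\sigma^2_\mathbf{s}/\sigma^2_\mathbf{w}$ and both schemes act on the same host (so $\sigma^2_\mathbf{s}$ is common), showing that R-QIM attains a larger SWR is equivalent to showing that its watermark power is strictly smaller than that of HS, i.e. $\sigma^2_{\mathbf{w},\mathrm{R\text{-}QIM}} < \sigma^2_{\mathbf{w},\mathrm{HS}}$. The whole argument therefore collapses to this single inequality, which I would establish under the stated fairness assumption (same capacity, host concentrated at the peak bin $\Omega_{\mathrm{max}}$).

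For the R-QIM side I would simply invoke the embedding distortion already derived in (\ref{Eq. MSE_proposed}), namely $\sigma^2_{\mathbf{w},\mathrm{R\text{-}QIM}}=\frac{\alpha}{12}\Delta^2$, valid under the uniform quantization-error model on $[-\Delta/2,\Delta/2]$. The HS side is the step that needs genuine work: in the peak bin (region ii) the embedding rule sends $\omega=\Omega_{\mathrm{max}}$ to $\omega'=\omega+m$ with $m\in\{0,1\}$ equiprobable, so the additive watermark is a Bernoulli$(1/2)$ shift whose mean-removed power is $\sigma^2_{\mathbf{w},\mathrm{HS}}=\tfrac14$, independent of $\Delta$.

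With both powers in hand, the comparison is routine: I would substitute into $\frac{\alpha}{12}\Delta^2<\frac14$ and use the admissible range $\alpha\in(1/2,1)$. Because $\alpha<1$, for every $\Delta\le\sqrt3$ one has $\frac{\alpha}{12}\Delta^2\le\frac{\alpha}{12}\cdot 3=\frac{\alpha}{4}<\frac14$, hence $\sigma^2_{\mathbf{w},\mathrm{R\text{-}QIM}}<\sigma^2_{\mathbf{w},\mathrm{HS}}$ and therefore $\mathrm{SWR}_{\mathrm{R\text{-}QIM}}>\mathrm{SWR}_{\mathrm{HS}}$. The threshold $\sqrt3$ is precisely the value at which $\frac{\Delta^2}{12}$ (the $\alpha\to1$ worst case of the R-QIM distortion) meets the HS power $\frac14$, which is why the bound comes out clean and free of $\alpha$.

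I expect the main obstacle to be pinning down $\sigma^2_{\mathbf{w},\mathrm{HS}}$ correctly: one must fix what ``power of the additive watermark'' means for the deterministic-plus-random HS map (second moment versus mean-removed variance) and must respect the fairness normalization so that only the information-bearing peak bin, and not the unconditionally shifted region iii, enters the comparison; it is the variance $\tfrac14$ of the $\{0,1\}$ shift that makes the threshold land exactly at $\sqrt3$. A secondary subtlety is justifying that the uniform-error model underlying (\ref{Eq. MSE_proposed}) and the Gaussian host assumed for HS are being weighed on a common footing, so that the two watermark powers are legitimately comparable.
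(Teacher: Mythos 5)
Your proposal is correct and rests on the same core reduction as the paper: compare watermark powers, take $\sigma^2_{\mathbf{w},\mathrm{R\text{-}QIM}}=\frac{\alpha}{12}\Delta^2$ from Eq.~(\ref{Eq. MSE_proposed}) and $\frac14$ for the Bernoulli shift in the peak bin, and close with $\alpha<1$ so that $\alpha\Delta^2<3$ whenever $\Delta\le\sqrt3$. The one substantive divergence is your treatment of region iii: you deliberately exclude the unconditional $+1$ shift of the covers above $\Omega_{\mathrm{max}}$ from the HS power, whereas the paper includes it, writing $\sigma^2_{\mathbf{w},\mathrm{HS}}=\frac14\mathrm{Pr}(X\in\mathrm{Region~ii})+\mathrm{Pr}(X\in\mathrm{Region~iii})$ and correspondingly weighting the R-QIM distortion by $\mathrm{Pr}(X\in\mathrm{Region~ii})$, so that the difference becomes $\frac{3-\alpha\Delta^2}{12}+\frac{3+\alpha\Delta^2}{6}\mathrm{Pr}(X\in\mathrm{Region~iii})$. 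Your choice is the conservative one: dropping the region-iii term only makes HS look better, so your inequality $\frac{\alpha\Delta^2}{12}<\frac14$ is exactly the paper's first (and only $\Delta$-dependent) term, and the paper's strictly positive region-iii contribution then widens the gap for free. In short, you prove a slightly stronger per-embedded-sample statement that implies the paper's per-host-sample one; the paper's accounting buys a larger and more honest distortion gap (HS also pays for shifting non-information-bearing samples), while yours isolates the clean threshold $\sqrt3$ and makes transparent why it is independent of $\alpha$. Your closing remarks about second moment versus variance and about the fairness normalization are exactly the right caveats; the paper resolves them the same way you do (variance $\frac14$ for the $\{0,1\}$ shift, equal capacity confined to the peak bin).
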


\begin{proof}	
	Due to the symmetry of the Gaussian distribution, we have $\mathrm{Pr}(X \in \mathrm{Region~ii})=\mathrm{Pr}(X \in \mathrm{Region~iii})$. Therefore, $\mathrm{Pr}(X \in \mathrm{Region~ii})=1-2\mathrm{Pr}(X \in \mathrm{Region~iii})$. With the same settings, the $\sigma^2_\mathbf{w}$ of HS is given by:
	\begin{align}\label{Eq. SWR_HS}
		\sigma^2_{\mathbf{w},\mathrm{HS}}=&\frac{1}{4}\mathrm{Pr}(X \in \mathrm{Region~ii})+\mathrm{Pr}(X \in \mathrm{Region~iii})\notag\\
		=&\frac{1}{4}+\frac{1}{2}\mathrm{Pr}(X \in \mathrm{Region~iii}),
	\end{align}
	while the $\sigma^2_\mathbf{w}$ of R-QIM is given by:
	\begin{align}\label{Eq. SWR_R-QIM}
		\sigma^2_{\mathbf{w},\mathrm{R-QIM}}=&\frac{\alpha\Delta^2}{12}\mathrm{Pr}(X \in \mathrm{Region~ii})\notag\\
		=&\frac{\alpha\Delta^2}{12}-\frac{\alpha\Delta^2}{6}\mathrm{Pr}(X \in \mathrm{Region~iii}).
	\end{align}
	Based on Eqs. (\ref{Eq. SWR_HS}) and (\ref{Eq. SWR_R-QIM}), we have:
	\begin{align}\label{Eq. difference HS2RQIM_2}
		\sigma^2_{\mathbf{w},\mathrm{HS}}-\sigma^2_{\mathbf{w},\mathrm{R-QIM}}
		=\frac{3-\alpha\Delta^2}{12}+\frac{3+\alpha\Delta^2}{6}\mathrm{Pr}(X \in \mathrm{Region~iii}).
	\end{align}
	Since $0<\mathrm{Pr}(X \in \mathrm{Region~iii})<1/2$, we have:
	\begin{eqnarray}
		\frac{3-\alpha\Delta^2}{12}<\frac{3-\alpha\Delta^2}{12}+\frac{3+\alpha\Delta^2}{6}\mathrm{Pr}(X \in \mathrm{Region~iii})<\frac{1}{2}.
	\end{eqnarray}
	Equation (\ref{Eq. difference HS2RQIM_2}) is larger than $0$ when $\Delta\leq\sqrt{3}$. Thus, the theorem is proved.
\end{proof}

According to Theorem \ref{The. greater SWR}, when considering a fixed setting with $\Delta=1$ in HS \cite{Guan_2020}, it can be observed that R-QIM achieves lower embedding distortion and better fidelity, based on the aforementioned assumption. Furthermore, it indicates that the fidelity of R-QIM can be controlled. When
$\Delta>\sqrt{3}$, by adjusting the parameters, we can obtain flexible fidelity performance, whether it is better or worse than HS. In the subsequent scheme design, we will demonstrate the benefits of this feature.

\begin{figure*}[t!]
	\centering
	\subfigure[Skewness]{\includegraphics[width=.3\textwidth]{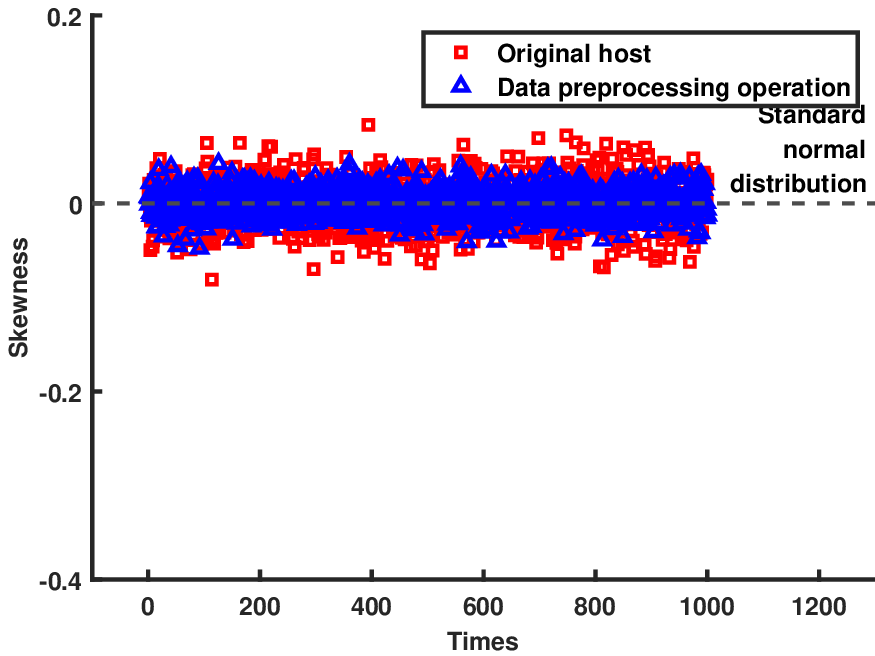}} \quad
	\subfigure[Kurtosis]{\includegraphics[width=.3\textwidth]{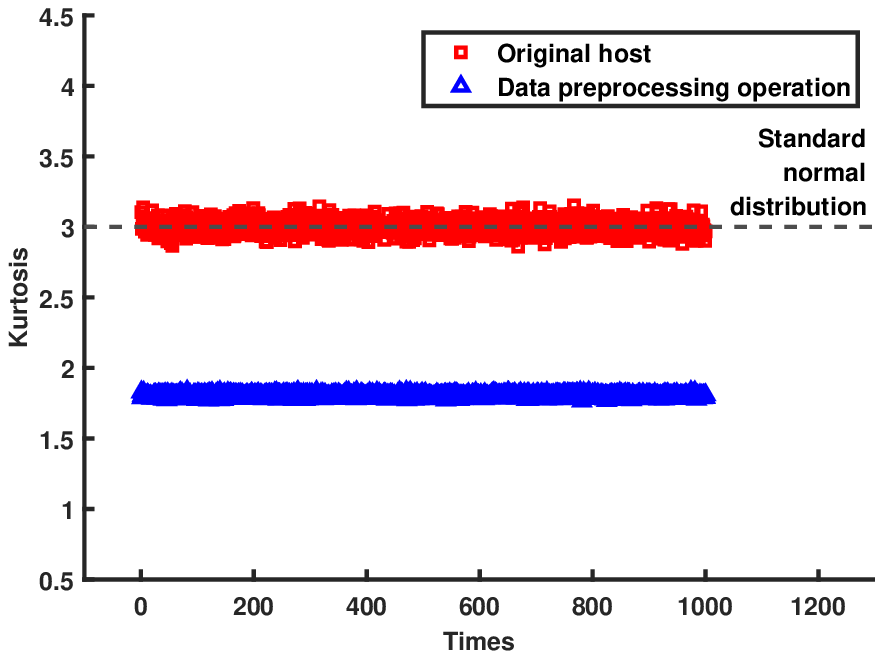}} \quad
	\subfigure[K-S test result] {\includegraphics[width=.3\textwidth]{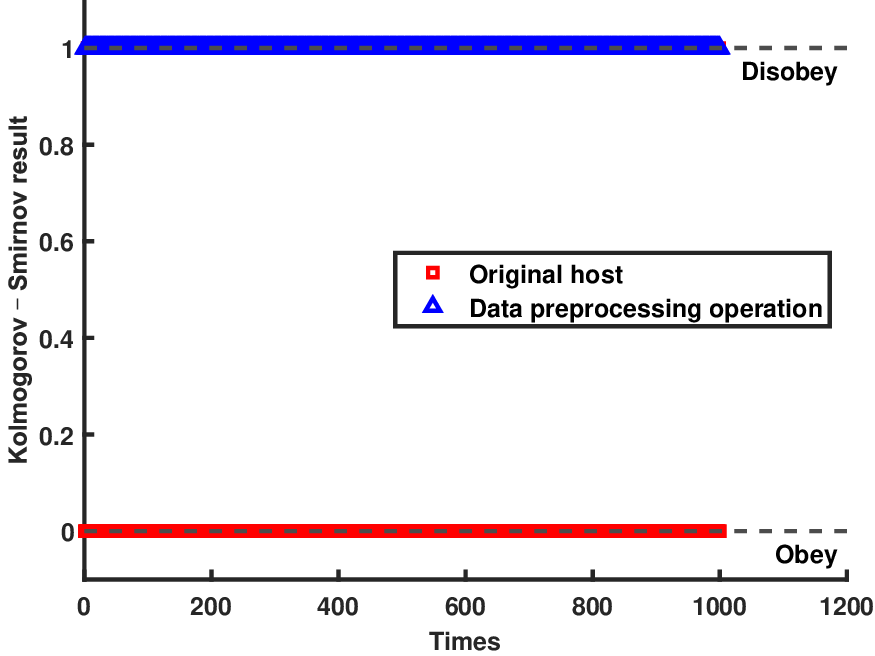}}
	\subfigure[Original versus normal distribution when $c=3$] {\includegraphics[width=.3\textwidth]{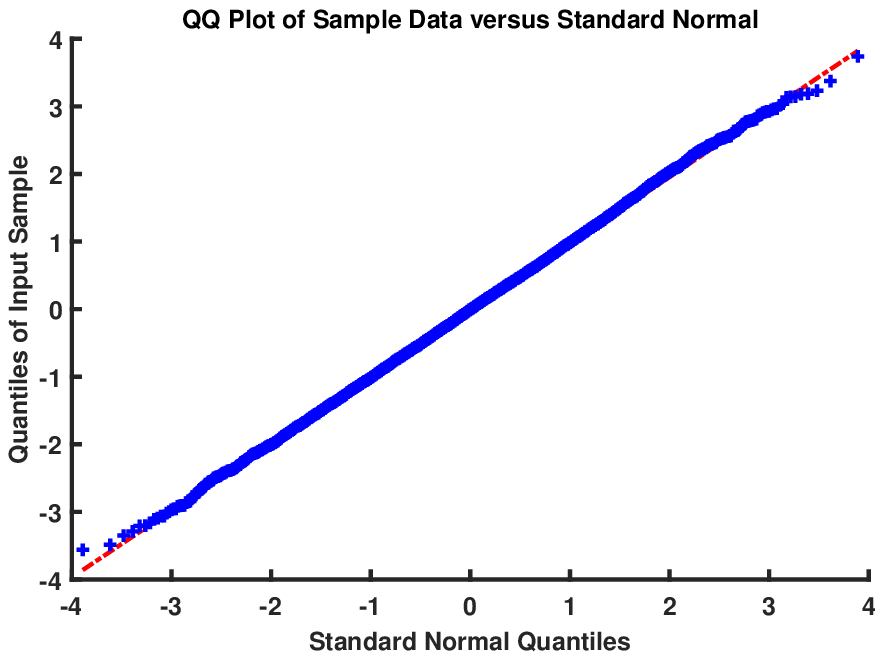}} \quad
	\subfigure[Preprocessed versus normal distribution when $c=3$] {\includegraphics[width=.3\textwidth]{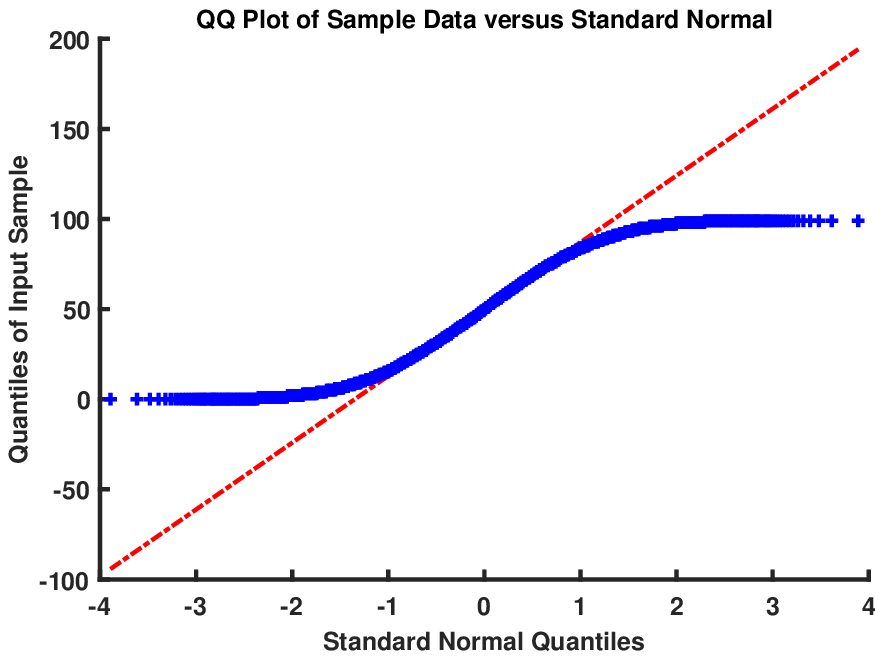}} \quad
	\subfigure[Preprocessed versus uniform distribution when $c=3$] {\includegraphics[width=.3\textwidth]{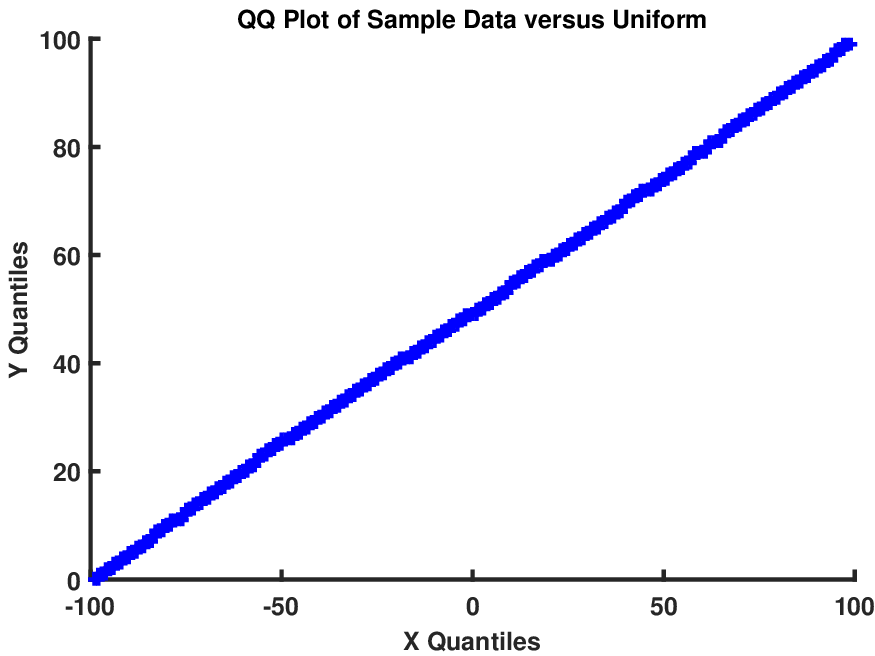}} 
	\caption{Skewness, Kurtosis and K-S test results on  normal distributed data, and their respective  Q-Q plots.} 
	\label{fig_ta_qq_plot}
\end{figure*}

\section{Applications of R-QIM in Static DNN Watermarking}\label{Sec. QY-RW scheme}

In this section, we explore the application of R-QIM in static deep neural network (DNN) watermarking. We propose a scheme that includes several algorithms to facilitate the embedding, extraction, and restoration processes. Furthermore, we outline the concrete steps for functions such as integrity protection and infringement identification, which can be realized using the proposed scheme. The schematics of the two applications are depicted in Figure \ref{Fig. Process}. For the sake of simplicity, we refer to the owner of the DNNs as "Alice," the legal user as "Bob," the illegal user as "Mallory," and the trusted third-party institution as "Institution."

\subsection{Wrapping up R-QIM}

To address security concerns, R-QIM requires additional measures. The watermarking, extracting, and restoring processes based on R-QIM are presented through pseudo-codes in \textbf{Mark} (Algorithm \ref{Alg. Mark}), \textbf{Extract} (Algorithm \ref{Alg. Verify}), and \textbf{Restore} (Algorithm \ref{Alg. Restore}). In these algorithms, certain parameters such as $cl$ and $k$ are set using a pseudo-random number generator (PRNG), while others like the step size $\Delta$ and scaling factor $\alpha$ are determined by the owner.

\textbf{Mark} (Algorithm \ref{Alg. Mark}) takes as input the trained model $\mathbf{W}$, the watermark $\textbf{m}$, and the aforementioned parameters. It outputs the watermarked model $\mathbf{W}_{wtm}$ along with side information, including the watermark information $\mathbf{w\_info}$ and the secret key $\mathbf{sk}$. By selecting a sequence $\mathbf{s}= [s_0,s_1,...,s_{L-1}]$ with the clue $cl$ and extracting relevant information ($L$ and $|\mathcal{M}|$) from $\textbf{m}$ using the $\mathrm{Info}$ function, each bit of the watermark $m_i$ is embedded into $s_i$ using the R-QIM embedding equation (\ref{Eq. QY-RW embedding}) via the $\mathrm{Emb}(\cdot)$ function. The watermark information $\mathbf{w\_info}$ combines $L$ and $|\mathcal{M}|$, while $\mathbf{sk}$ includes $cl$, $k$, and $\Delta$. To maintain the security properties of the embedded watermark, the owner of the DNN model should keep $\mathbf{w\_info}$, $\mathbf{sk}$, and $\alpha$ confidential.

\textbf{Extract} (Algorithm \ref{Alg. Verify}) performs the watermark extraction from the watermarked model $\mathbf{W}_{wtm}$ generated by \textbf{Mark}. With the assistance of the watermark information $\mathbf{w\_info}$ and the secret key $\mathbf{sk}$ held by the owner, an estimated sequence $\hat{\mathbf{d}}$ is created using the R-QIM extraction equation (\ref{Eq. QY-RW extracting}) via the $\mathrm{Ext}(\cdot)$ function, following the same selection process as in \textbf{Mark}. Then, utilizing the watermark information in the codebook, \textbf{Extract} outputs an estimated watermark $\hat{\mathbf{m}}$ derived from $\hat{\mathbf{d}}$. Notably, since watermark extraction requires the assistance of the secret key $\mathbf{sk}$ rather than the scaling

factor $\alpha$, which relates to the security of DNN model recovery, \textbf{Extract} should be performed by the DNN model owner or a trusted third-party institution, ensuring the non-disclosure of the scaling factor $\alpha$.

\textbf{Restore} (Algorithm \ref{Alg. Restore}) takes the watermarked model $\mathbf{W}_{wtm}$ as input and restores it to its original form using the watermark information $\mathbf{w\_info}$, the secret key $\mathbf{sk}$, and the scaling factor $\alpha$ as side information. After the same selection process as \textbf{Mark}, each sample $y_i$ is recovered to $s_i$ one by one using the R-QIM recovery equation (\ref{Eq. QY-RW recovering}) via the $\mathrm{Rec}(\cdot)$ function. Since the correct restoration relies on the noise term $n$, we can detect tampering in the watermarked model under a noiseless channel or a known noisy channel, making the watermarking process reversible for protecting the integrity of the watermarked model. Furthermore, as the restored model no longer contains the watermark, the effectiveness of the restoration process can be evaluated by verifying the absence of the watermark in the DNN model.

\begin{algorithm}[t]
	\caption{Mark} 
	\label{Alg. Mark}
	\hspace*{0.02in} {\bf Input:} 
	Trained Model $\mathbf{W}$, Watermark $\mathbf{m}$, Scaling Factor $\alpha$, Dithering Vector $k$, Embedding Clue $cl$, Step Size $\Delta$\\
	\hspace*{0.02in} {\bf Output:} 
	Watermarked Model $\mathbf{W}_{wtm}$, Watermark Information $\mathbf{w\_info}$, Secret Key $\mathbf{sk}$
	\begin{algorithmic}[1]
		\State $\left[~L,|\mathcal{M}|~\right]\leftarrow \mathrm{Info}(\mathbf{m})$ 
		\State $\mathbf{c}\leftarrow\mathrm{Construction}(cl,L)$
		\State $\mathbf{W}_{wtm}\leftarrow\mathbf{W}$,  $i\leftarrow0$
		\For{$++i\leq L$} 
		\State $s_i\leftarrow\mathbf{W}(\mathbf{c}(i))$
		\State $m_i\leftarrow\mathbf{m}(i)$
		\State $\mathbf{W}_{wtm}(\mathbf{c}(i))\leftarrow\mathrm{Emb}(s_i,m_i,\alpha,k,\Delta)$
		\EndFor
		\State $\mathbf{w\_info}\leftarrow\left[~L,|\mathcal{M}|~\right]$
		\State $\mathbf{sk}\leftarrow\left[~k,cl,\Delta~\right]$
	\end{algorithmic}
\end{algorithm}

\begin{algorithm}[t]
	\caption{Extract} 
	\label{Alg. Verify}
	\hspace*{0.02in} {\bf Input:} 
	Watermarked Model $\mathbf{W}_{wtm}$, Watermark Information $\mathbf{w\_info}$, Secret Key $\mathbf{sk}$\\
	\hspace*{0.02in} {\bf Output:} 
	Extracted Watermark $\hat{\mathbf{m}}$
	\begin{algorithmic}[1]
		\State $\left[~L,|\mathcal{M}|~\right]\leftarrow \mathbf{w\_info}$ 
		\State $\left[~\alpha,k,cl,\Delta~\right]\leftarrow\mathbf{sk}$
		\State $\mathbf{c}\leftarrow\mathrm{Construction}(cl,L)$
		\State $i\leftarrow0$
		\For{$++i\leq L$} 
		\State $y_i\leftarrow\mathbf{W}_{wtm}(\mathbf{c}(i))$
		\State $d_i\leftarrow\mathrm{Ext}(y_i,k,\Delta,|\mathcal{M}|)$
		\State $\hat{\mathbf{m}}(i)\leftarrow\mathrm{Codebook}(d_i)$
		\EndFor
	\end{algorithmic}
\end{algorithm}

\begin{algorithm}[t]
	\caption{Restore} 
	\label{Alg. Restore}
	\hspace*{0.02in} {\bf Input:} 
	Watermarked Model $\mathbf{W}_{wtm}$, Watermark Information $\mathbf{w\_info}$, Secret Key $\mathbf{sk}$, Scaling Factor $\alpha$\\
	\hspace*{0.02in} {\bf Output:} 
	Recovered Model $\hat{\mathbf{W}}$
	\begin{algorithmic}[1]
		\State $\left[~L,|\mathcal{M}|~\right]\leftarrow \mathbf{w\_info}$ 
		\State $\left[~\alpha,k,cl,\Delta~\right]\leftarrow\mathbf{sk}$
		\State $\mathbf{c}\leftarrow\mathrm{Construction}(cl,L)$
		\State $\hat{\mathbf{W}}\leftarrow\mathbf{W}_{wtm}$, $i\leftarrow0$
		\For{$++i\leq L$} 
		\State $y_i\leftarrow\mathbf{W}_{wtm}(\mathbf{c}(i))$
		\State $\hat{\mathbf{W}}(\mathbf{c}(i))\leftarrow\mathrm{Rec}(y_i,k,\alpha,\Delta,|\mathcal{M}|)$
		\EndFor
	\end{algorithmic}
\end{algorithm}

\begin{figure*}[t!]
	\centering
	\subfigure[Integrity protection]{\includegraphics[width=.9\textwidth]{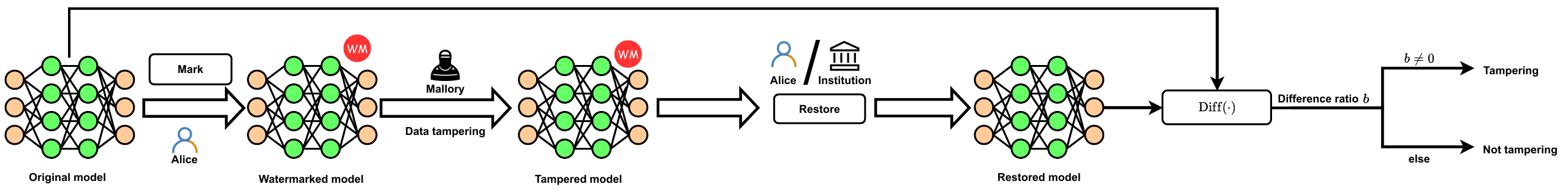}} \quad
	\subfigure[Infringement identification]{\includegraphics[width=.9\textwidth]{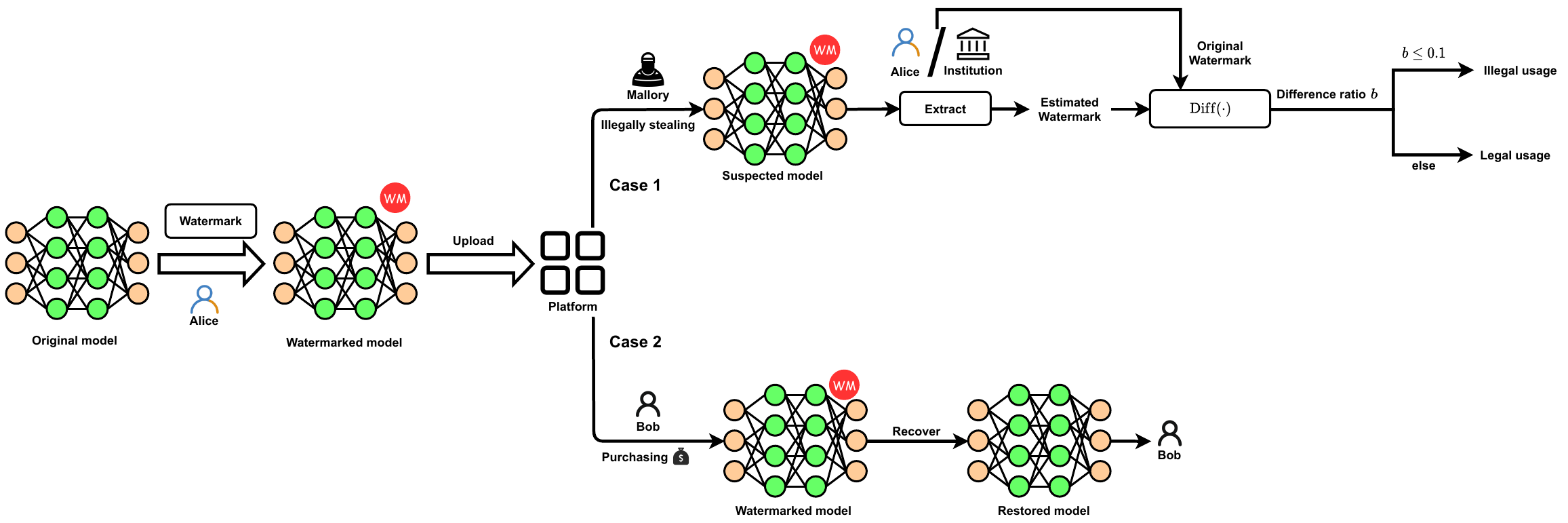}}\quad 
	\caption{Schematics of using R-QIM for integrity protection and infringement identification.} 
	\label{Fig. Process}
\end{figure*}
 
\subsection{Integrity Protection}

To enable reversibility in DNN watermarks, Guan et al. \cite{Guan_2020} introduced the concept of integrity protection for DNN models. They proposed a scheme that verifies whether a DNN model has been tampered with by comparing the bit differences in weights between the restored and original models. Building on this idea, we present an integrity protection scheme that employs R-QIM \cite{Guan_2020}, as depicted in Figure \ref{Fig. Process} (a).

In this scheme, Alice embeds her watermarks into a commercialized DNN model $\mathbf{W}$ using the \textbf{Mark} algorithm \cite{Guan_2020}, resulting in a watermarked DNN model $\mathbf{W}_{wtm}$. During transmission, Mallory illegally intercepts $\mathbf{W}_{wtm}$, modifies its weights, and profits from sharing the tampered model. To identify tampering, we define two types of operations for noiseless and noisy channels.

\begin{itemize}
	\item \textbf{Noiseless channel:} In this scenario, where correct recovery is guaranteed, the tampered DNN model is restored using the \textbf{Restore} algorithm \cite{Guan_2020} to obtain an estimated model. Notably, the \textbf{Restore} function can meet the requirements of perfect recovery after watermark extraction since Equation (\ref{Eq. QY-RW recovering}) \cite{Guan_2020} contains $Q_{\frac{\Delta}{|\mathcal{M}|}+k}(y)$, which can be regarded as watermark extraction. The weights of the restored model are then compared to the original model using a difference function $\mathrm{Diff}(\cdot)$, which calculates a difference ratio $b$. Due to the sensitivity of the recovery process, even minor changes to $\mathbf{W}_{wtm}$ would lead to differences in the weights of the restored model. This characteristic allows for integrity assessment, where $b=1$ (or $b=0$) indicates that $\mathbf{W}_{wtm}$ has (not) been tampered with.
	\item \textbf{Noisy channel:} In this scenario, tampering of DNN models can be identified when the noise term $n$ is sufficiently small. By leveraging Equation (\ref{Eq. QY-RW recovering difference}) \cite{Guan_2020}, the difference between the restored and original models can be theoretically measured, allowing for a comparison that excludes the interference of the noise term $n$. Theoretical differences between the restored and original models are computed using Equation (\ref{Eq. QY-RW recovering difference}) \cite{Guan_2020}, and a difference ratio $b$ is obtained. When $b=1$ (or $b=0$), it indicates that $\mathbf{W}_{wtm}$ has (not) been tampered with.
\end{itemize}

In summary, our proposed scheme is well-suited for integrity protection compared to the scheme presented by Guan et al. \cite{Guan_2020}. Our scheme offers higher fidelity for $\mathbf{W}_{wtm}$, as justified by Theorem \ref{The. greater SWR} \cite{Guan_2020}. Additionally, our scheme is the first to protect the integrity of $\mathbf{W}_{wtm}$ over noisy channels.

\subsection{Infringement Identification}
In addition to integrity protection, reversible DNN watermarking can be utilized for infringement identification of suspicious DNN models. When the watermark is removed during the recovery operation, no watermark remains in the restored model. This enables the distinction between a legal user holding the restored model and an illegal user holding the watermarked model. Based on this concept, we propose a novel scheme for infringement identification of DNN models, where a user receives a secret key for recovery after legalization and obtains a restored model. The proposed scheme for legitimate authentication is illustrated in Figure \ref{Fig. Process} (b). For simplicity, we refer to the owner, legal user, illegal user, and trusted third-party institution as "Alice," "Bob," "Mallory," and "Institution," respectively.

In our proposed scheme, Alice sells her commercialized DNN model $\mathbf{W}$ through an online/offline platform and utilizes our scheme for marking the ownership of $\mathbf{W}$. After embedding a watermark into $\mathbf{W}$ using the \textbf{Mark} algorithm \cite{Guan_2020}, the resulting watermarked model $\mathbf{W}_{wtm}$ is sent to the platform, serving as an exhibit or trial product to promote Alice's model. As the embedding process occurs after model training, the fidelity of $\mathbf{W}_{wtm}$ is intentionally lower, ensuring that its disclosure does not harm Alice's rights.

When Bob expresses interest in the product, Alice shares $\mathbf{w\_info}$, $\mathbf{sk}$, and $\alpha$ with him. Bob can then recover $\mathbf{W}_{wtm}$ to its original form using the \textbf{Restore} algorithm \cite{Guan_2020}. The recovered model is identical to the original model, maximizing its effectiveness, and the watermark is completely removed, making it undetectable in the recovered model. If Mallory illegally steals the DNN model and shares it on public platforms, Alice can report the incident to the Institution for arbitration. To authenticate the legitimacy of the suspicious model held by Mallory, Alice or the Institution can extract the estimated watermark $\hat{\textbf{m}}$ from the suspect model using the \textbf{Extract} algorithm \cite{Guan_2020}. Then, $\hat{\textbf{m}}$ can be compared to Alice's watermark using $\mathrm{Diff}(\cdot)$, which outputs a difference ratio $b$ for detecting the presence of the embedded watermark. When $b\leq0.1$, the watermark is considered detected, and Mallory is identified as an illegal user.

To avoid infringing on Alice's rights, the DNN watermarking scheme for infringement identification should exhibit lower fidelity, ensuring that the effectiveness of the watermarked model is no better than the original one. Thanks to Theorem \ref{The. greater SWR} \cite{Guan_2020}, R-QIM offers greater distortion than HS \cite{Guan_2020} by setting $\Delta>\sqrt{3}$ and an appropriate $\alpha$, making it more suitable for the infringement identification scenario.

\section{SIMULATIONS}\label{Sec. experiment}

\begin{figure*}[t!]
	\centering
	\subfigure[Original MLP weigts versus normal distribution]{\includegraphics[width=.3\textwidth]{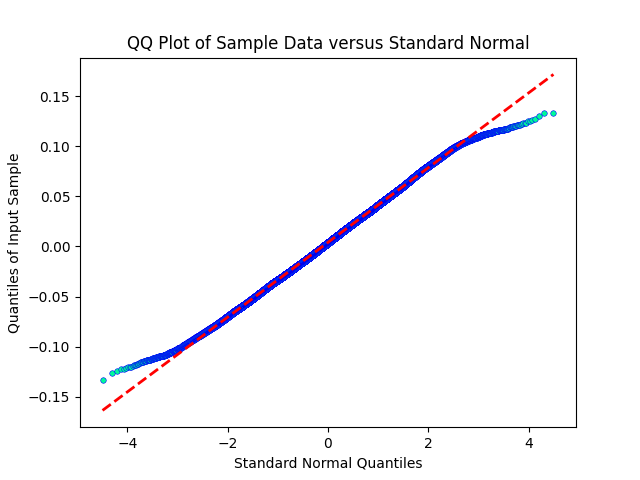}} \quad
	\subfigure[Preprocessed MLP weights versus normal distribution]{\includegraphics[width=.3\textwidth]{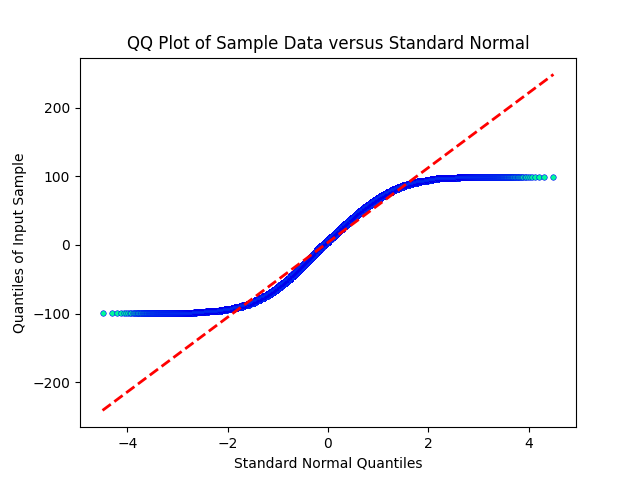}}\quad 
	\subfigure[Preprocessed MLP weights versus uniform distribution]{\includegraphics[width=.3\textwidth]{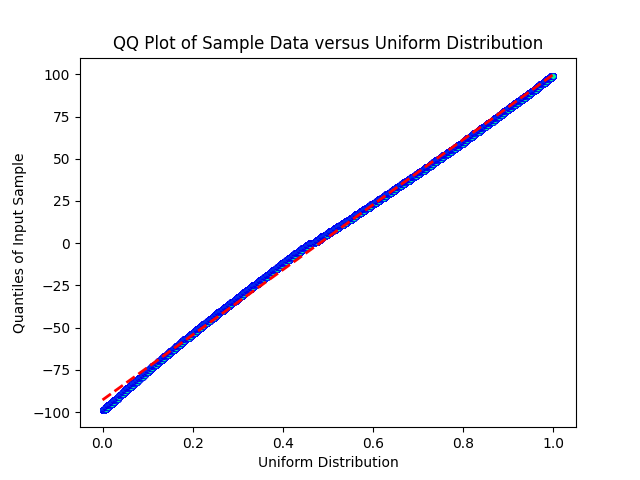}} 
	\subfigure[Original VGG weigts versus normal distribution]{\includegraphics[width=.3\textwidth]{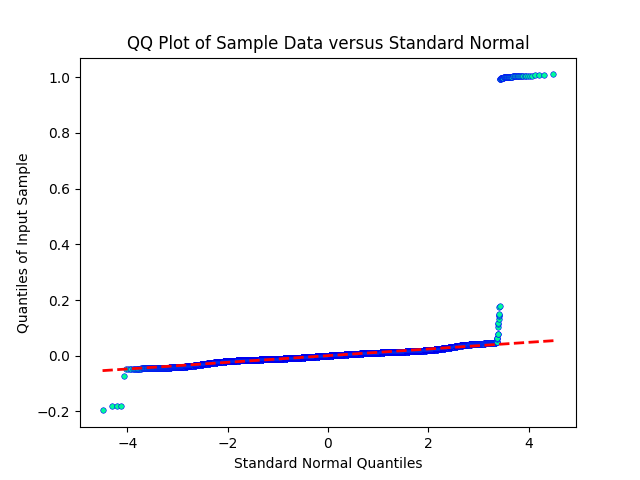}} \quad
	\subfigure[Preprocessed VGG weights versus normal distribution]{\includegraphics[width=.3\textwidth]{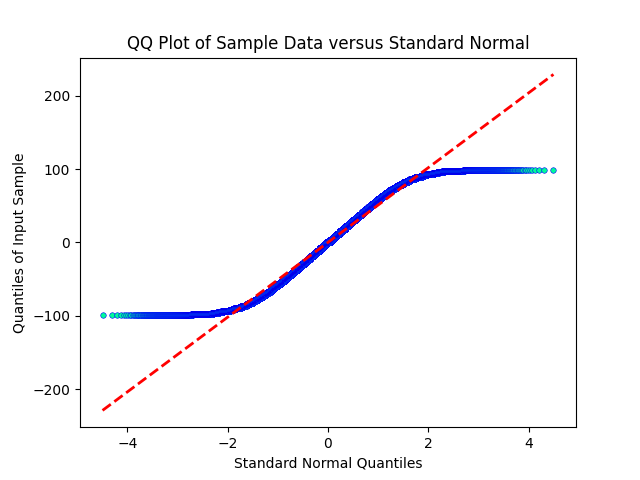}} \quad
	\subfigure[Preprocessed VGG weights versus uniform distribution]{\includegraphics[width=.3\textwidth]{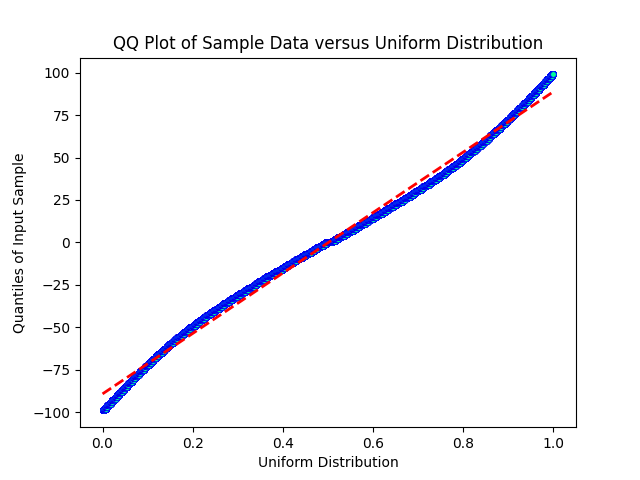}} 
	\caption{Q-Q plot on the original and preprocessed weights with $c=3$.} 
	\label{fig_ss1_qq_plot}
\end{figure*}

To evaluate the effectiveness of the proposed R-QIM scheme, the simulations are divided into three parts:
i) Usability of the HS method in \cite{Guan_2020}.
ii) Comparison between R-QIM and HS in terms of capacity and fidelity.
iii) Impact of R-QIM parameters.

The experimental setups for these simulations are summarized as follows:

\textbf{Datasets and Models}: The datasets chosen for training the models are \textbf{MNIST} \cite{mnist} and \textbf{CIFAR10} \cite{krizhevsky2009learning}. The \textbf{MNIST} dataset consists of 60,000 training and 10,000 testing gray-scale images of handwritten digits, each with a size of 28$\times$28 pixels and divided into 10 classes. The \textbf{CIFAR10} dataset contains 50,000 training and 10,000 testing color images of various objects, with a size of 32$\times$32 pixels. Two combinations of models and datasets are used: the Multi-layer Perceptron (MLP) model trained on \textbf{MNIST} (referred to as Group A) and the Visual Geometry Group (VGG) model trained on \textbf{CIFAR10} (referred to as Group B).

\textbf{Parameters}: The watermark is generated by converting a piece of text into a bit stream. The step size is set to $\Delta=1$, and the scaling factor is chosen as $\alpha=0.8675$ with a dithering value of $k=0$.

\textbf{Indicators}: The fidelity of the watermarked model is evaluated using training loss and classification accuracy. The training loss measures the damage caused by watermark embedding, while the classification accuracy reflects the effectiveness of the watermarked network. Lower training loss indicates less impact from watermark embedding, and higher classification accuracy indicates greater effectiveness. To detect the existence of copyright information and perform tampering detection, the bit error rate (BER) is employed. BER is calculated as the ratio of the number of differing bits between the original and estimated message to the total number of bits. In copyright protection, if BER is not larger than 10\%, it indicates the presence of information; otherwise, it does not. For tampering detection, a model is considered untampered if BER equals 0.

\subsection{Usability test}

\begin{table}[t!]\scriptsize
	\setlength{\tabcolsep}{1.5mm}
	\centering
	\renewcommand\arraystretch{1.5}
	\caption{The analysis of weights in different models.}  
	\label{Tab. analysis of weights}  
	\begin{tabular}{c c c c c}
		\hline
		\multirow{2}{*}{\textbf{Metric}} & \multicolumn{2}{c}{\textbf{MLP ($198656$ length)}} & \multicolumn{2}{c}{\textbf{VGG ($200000$ length)}}\\ \cline{2-5}
		&Original &Preprocessed&Original &Preprocessed\\\hline
		Skewness &0.0426 &-0.1059 &34.8544 &-0.0028 \\
		Kurtosis &2.8642 &1.8324 &1692.39 &1.8276 \\
		$P\leq5\%$ in K-S test 
		&\scalebox{0.85}[1]{$\times$} &\scalebox{0.85}[1]{$\times$} &\scalebox{0.85}[1]{$\times$} &\scalebox{0.85}[1]{$\times$} \\
		$P\leq5\%$ in J-B test 
		&\scalebox{0.85}[1]{$\times$} &\scalebox{0.85}[1]{$\times$} &\scalebox{0.85}[1]{$\times$} &\scalebox{0.85}[1]{$\times$} \\
		\hline
	\end{tabular}
\end{table}

\begin{table}[t!]\scriptsize
	\setlength{\tabcolsep}{1.5mm}
	\centering
	\renewcommand\arraystretch{1.5}
	\caption{Comparison of information embedding capacity.}  
	\label{Tab. fig_ss2_capacity_RQIM2HS}  
	\begin{tabular}{c c c c c}
		\hline
		\multirow{2}{*}{\makecell[c]{\textbf{Host}\\\textbf{length}}} & \multicolumn{2}{c}{\textbf{MLP ($198656$ length)}} & \multicolumn{2}{c}{\textbf{VGG ($200000$ length)}}\\ \cline{2-5}
		&R-QIM &HS\cite{Guan_2020}&R-QIM &HS\cite{Guan_2020}\\\hline		
		$20$\% &39732 &377 &40000 &396 \\
		$40$\% &79463 &811 &80000 &760 \\
		$60$\% &119194 &1187 &120000 &1177 \\
		$80$\% &158925 &1565 &160000 &1604 \\
		$100$\% &198656 &1969 &200000 &1982 \\ \hline
	\end{tabular}
\end{table}

 In Section \ref{SSec. Theoretical analysis}, we theoretically identified potential weaknesses of the HS method. To support our claims, we conducted a numerical analysis of the weights of DNN models.
 
 Figure \ref{fig_ss1_qq_plot} presents the Q-Q plot comparing the original and preprocessed weights of the MLP and VGG models against normal and uniform distributions. We observe that the original weights of both models exhibit a distribution closer to the normal distribution, while the preprocessed weights clearly follow the uniform distribution as integers.
 
 Furthermore, we performed an analysis of skewness, kurtosis, K-S test, and Jarque-Bera (J-B) test results for the original and preprocessed weights of the MLP and VGG models. The summarized analysis is presented in Table \ref{Tab. analysis of weights}. The results indicate that: 
 
 i) Data preprocessing flattens the distribution of the weights, resulting in lower kurtosis values for the preprocessed weights.
 
 ii) Neither the original nor the preprocessed weights of the MLP and VGG models pass the K-S and J-B tests.
 
 These experiments demonstrate that realistic DNN weights do not follow a normal distribution. However, this does not undermine the validity of the data preprocessing method proposed in \cite{Guan_2020} for transforming data from a normal to a uniform distribution. Nevertheless, it highlights the limited usability of the \cite{Guan_2020} method.

\subsection{Capacity and fidelity comparisons}

\begin{figure}[t!]
	\centering
	\subfigure[Loss in Group A]{\includegraphics[width=.24\textwidth]{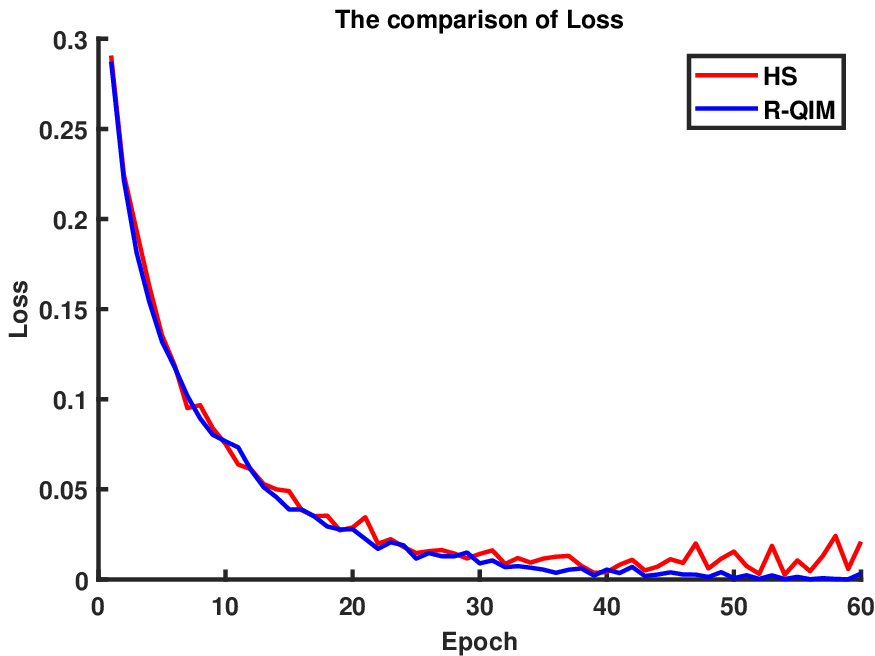}} 
	\subfigure[Accuracy in Group A] {\includegraphics[width=.24\textwidth]{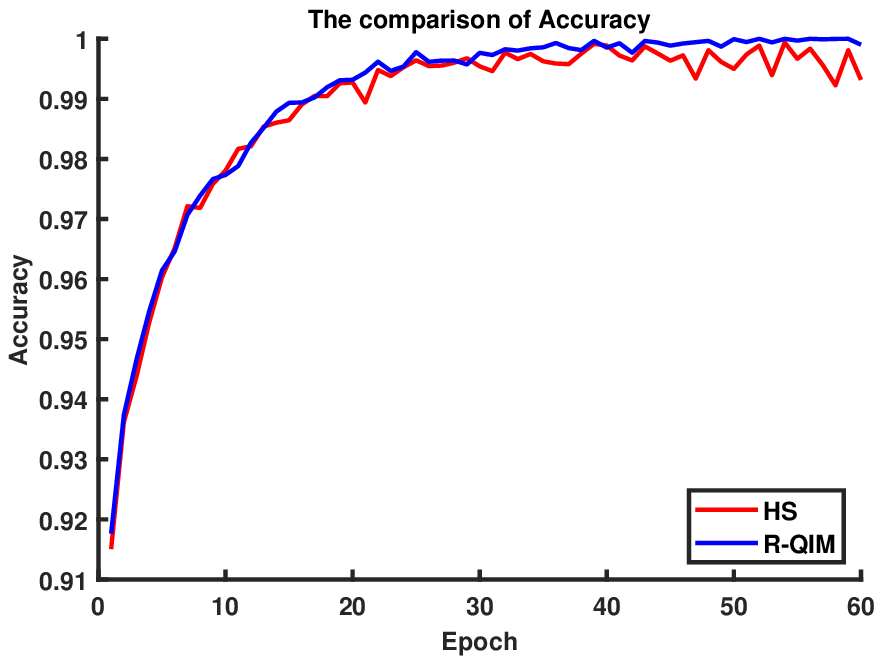}}
	\subfigure[Loss in Group B]{\includegraphics[width=.24\textwidth]{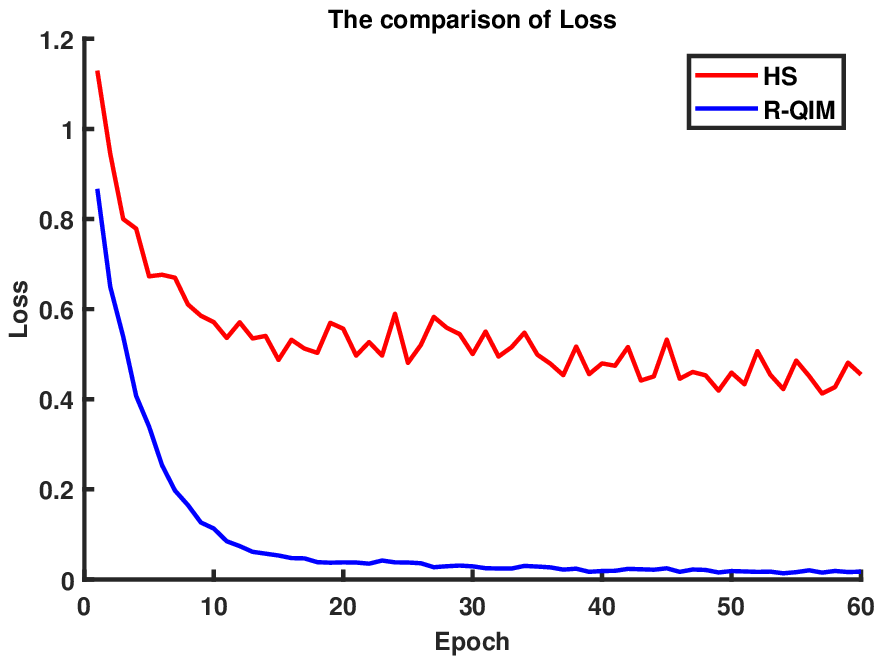}} 
	\subfigure[Accuracy in Group  B] {\includegraphics[width=.24\textwidth]{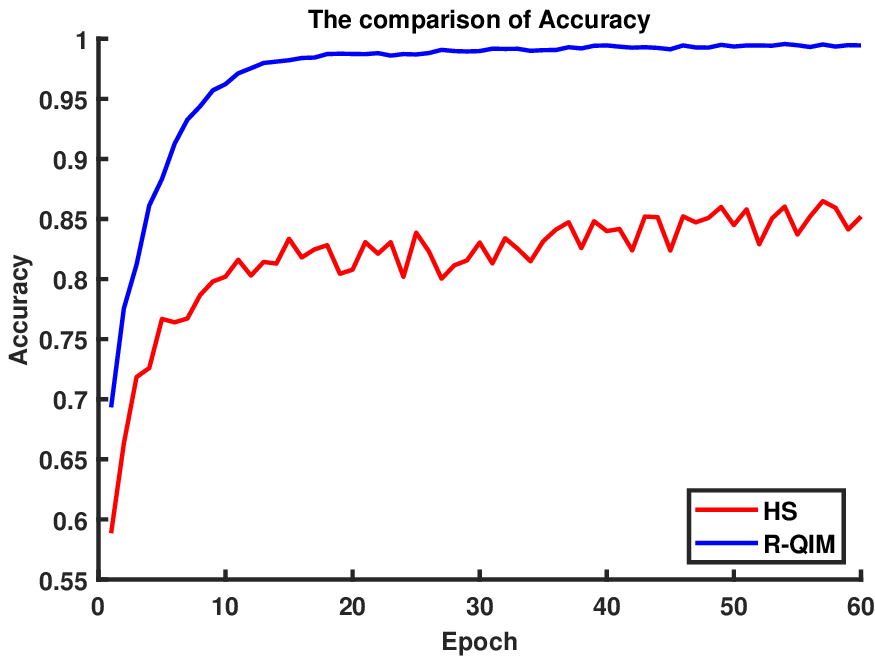}}

	\caption{Comparison of loss and accuracy with different epochs.} 
	\label{fig_ss2_comparison_RQIM2HS1}
\end{figure}

\begin{figure}[t!]
	\centering
	\subfigure[Loss in Group A]{\includegraphics[width=.24\textwidth]{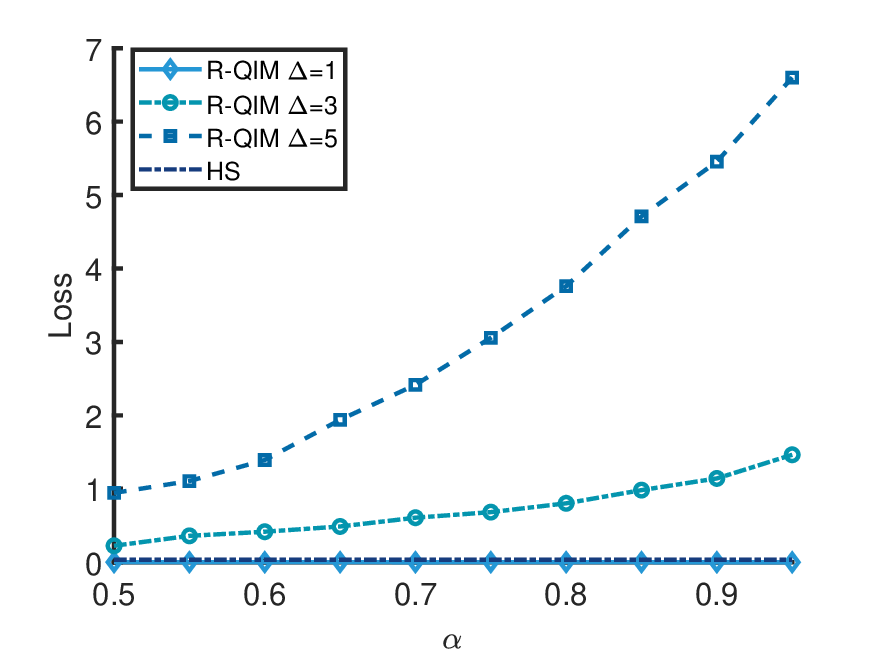}} 
	\subfigure[Accuracy in Group A] {\includegraphics[width=.24\textwidth]{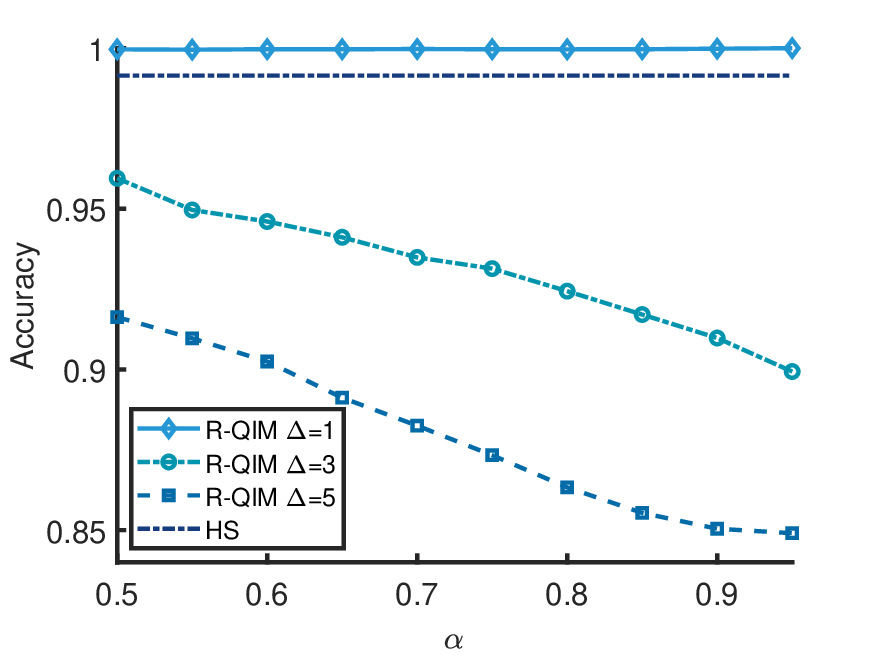}}
	\subfigure[Loss in Group B]{\includegraphics[width=.24\textwidth]{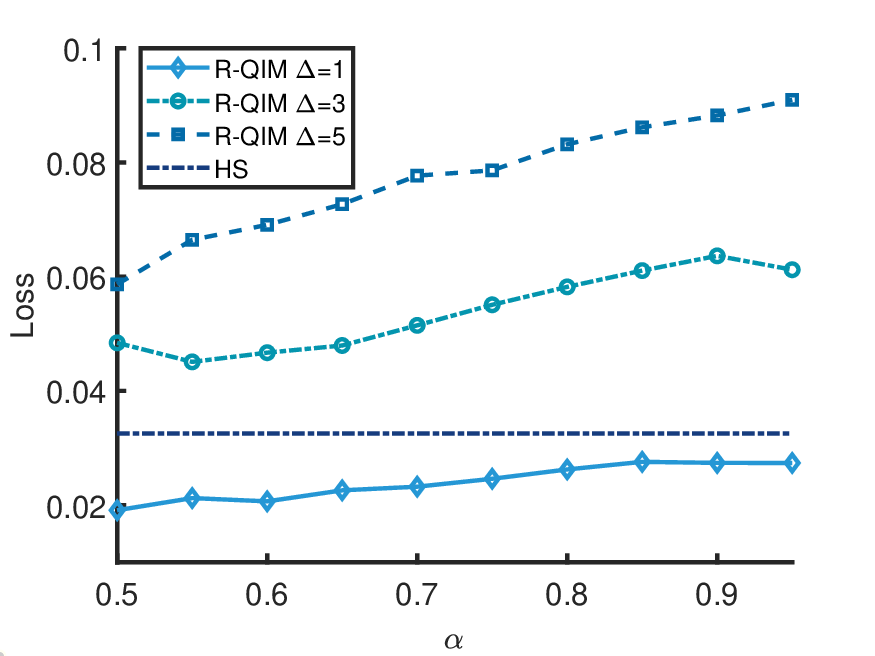}} 
	\subfigure[Accuracy in Group  B] {\includegraphics[width=.24\textwidth]{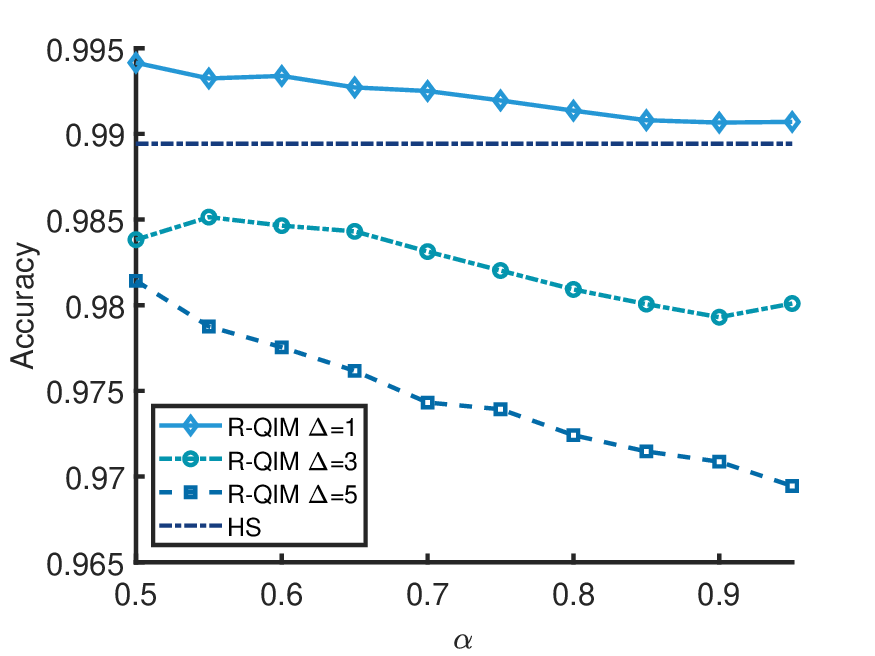}}
	\caption{Comparison of loss and accuracy with different $\alpha$ and $\Delta$.} 
	\label{fig_ss2_comparison_RQIM2HS2}
\end{figure}

In order to assess the adaptability and superiority of the two proposed schemes in terms of capacity and fidelity, we conducted several experiments to compare R-QIM with HS in two specific applications. The first application focuses on integrity protection, which requires a watermarking method with a high embedding capacity and minimal embedding damage. Therefore, we compared the maximum available capacity, classification accuracy, and training loss between our proposed scheme and the method proposed by \cite{Guan_2020} with a fixed step size $\Delta=1$.

The results for the maximum available capacity are presented in Table \ref{Tab. fig_ss2_capacity_RQIM2HS}, revealing a significant difference between R-QIM and HS. Regardless of whether it is group A or B, the table clearly indicates that R-QIM exhibits a higher embedding capability compared to the benchmark method, which aligns with our theoretical analysis.

Regarding fidelity, we compared the training loss and classification accuracy of the watermarked models embedded at different epochs using R-QIM and HS. The corresponding results are depicted in Fig. \ref{fig_ss2_comparison_RQIM2HS1}. It can be observed that R-QIM consistently outperforms the benchmark method in both metrics for both group A and B, with the difference being more pronounced in group B. Importantly, these observations support our assumption that lower embedding distortion leads to better fidelity of the watermarked model.

For infringement identification, according to Theorem \ref{The. greater SWR}, R-QIM can achieve a more noticeable decline in fidelity compared to HS by setting $\Delta>\sqrt{3}$. To verify this, we examined the classification accuracy and training loss of MLP and VGG models with different values of $\alpha$ and $\Delta$, as shown in Fig. \ref{fig_ss2_comparison_RQIM2HS2}. Based on these observations, we conclude the following:
i) As $\alpha$ and $\Delta$ increase, the loss of the watermarked model increases while the accuracy decreases, which aligns with our expectation regarding the relationship between distortion and fidelity.
ii) When $\Delta=1<\sqrt{3}$ in R-QIM, it outperforms HS in terms of both loss and accuracy for groups A and B. However, when $\Delta=3$ and $5>\sqrt{3}$, HS performs better than R-QIM. This finding supports Theorem \ref{The. greater SWR} and demonstrates the flexible performance of R-QIM, which determines its applicability in infringement identification.

\subsection{Performance of R-QIM Recovery}

To assess the performance of R-QIM recovery, we conducted several simulations focusing on the accuracy of the recovered values and the presence of watermarks. In these experiments, the watermarks were converted to uniform data consisting of 4264 bits.

In the first experiment, we aimed to compare the performance difference in implementing reversible operations. We trained two combinations from scratch twice for 60 epochs and embedded the watermark at epoch 30 using the proposed scheme. In the first run (denoted by the green line), a reversible operation was applied immediately after the embedding process, while in the second run (denoted by the red line), no reversible operation was applied. Fig. \ref{Fig. Recovering Effects1} presents notable observations from this experiment:
i) After watermark embedding, the model's accuracy sharply degraded due to the parameter modification. However, with the implementation of the proposed reversible operation, the reduced accuracy immediately restored. This demonstrates the effectiveness of the reversible operation in offsetting the damages caused by watermark embedding.
ii) Without the reversible operation, when the reduced accuracy reached a plateau, the accuracy of both groups dropped below that of the reversible operation applied case. This indicates incomplete compensation in subsequent training for the watermarked model without reversible operation, while the compensation is complete with the reversible operation.
iii) We observed that group A was more severely affected than group B after watermark embedding, and it reached a slower plateau in subsequent training, suggesting a higher effectiveness of the reversible operation in group A.

To analyze the specific effects of the various processes in the proposed method, we compared the values of the original, watermarked, and recovered weights for the two combinations in Fig. \ref{Fig. Recovering Effects2}. In this figure, the points representing the original and recovered cover (represented by a horizontal line and a vertical bar, respectively) coincide at each index of the sample, indicating correct recovery. Additionally, the distribution of watermarked weights (represented by crosses) in Fig. \ref{Fig. Recovering Effects2} illustrates the impact on the weights caused by watermark embedding.

Finally, as the infringement identification function relies on determining whether the restored DNN model contains a watermark, we compared the bit error rate (BER) metric of the watermark with and without the reversible operation, as depicted in Fig. \ref{Fig. Recovering Effects3}. The BER value of the watermarked model was 0.0005, whereas it increased to 0.43 after applying the reversible operation. This confirms that the reversible operation can effectively remove the watermark embedded in the host model, thereby demonstrating the validity of the legitimacy authentication scheme.

\begin{figure}[t!]
	\centering
	\subfigure[Group A]{\includegraphics[width=.24\textwidth]{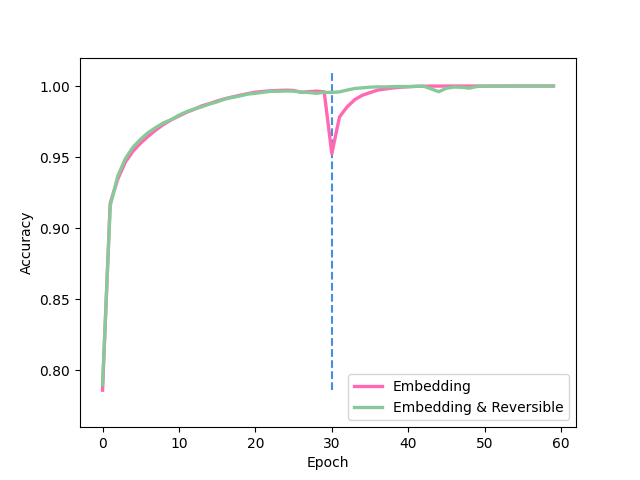}} 
	\subfigure[Group B]{\includegraphics[width=.24\textwidth]{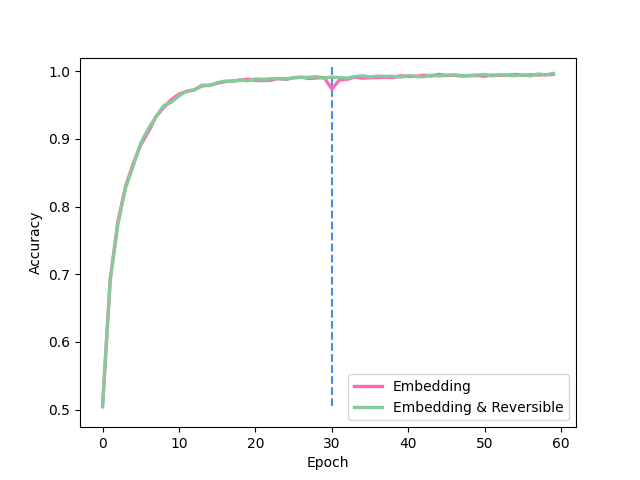}}
	
	\caption{Performance of recovering while  embedding at epoch 30.} 
	\label{Fig. Recovering Effects1}
\end{figure}

\begin{figure}[t!]
	\centering
	\subfigure[Group A]{\includegraphics[width=.5\textwidth]{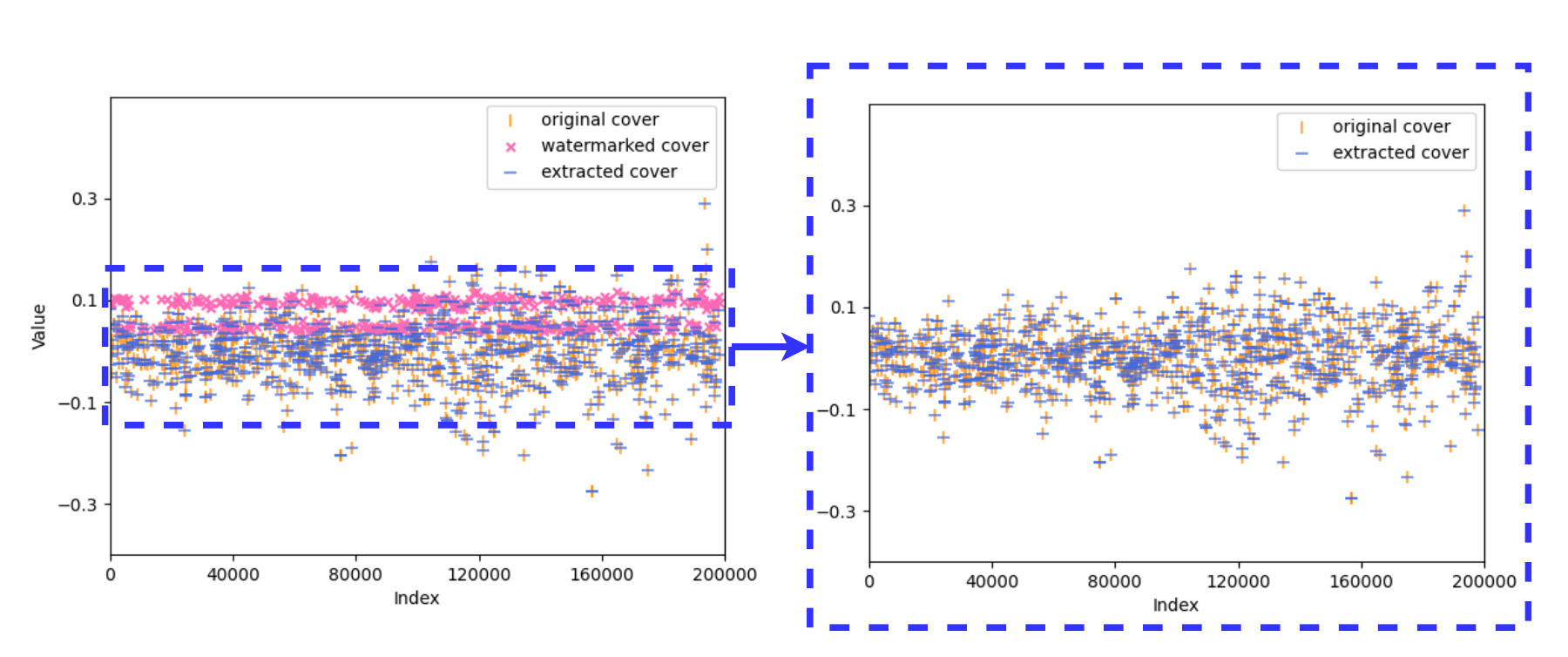}} 
	\subfigure[Group B]{\includegraphics[width=.5\textwidth]{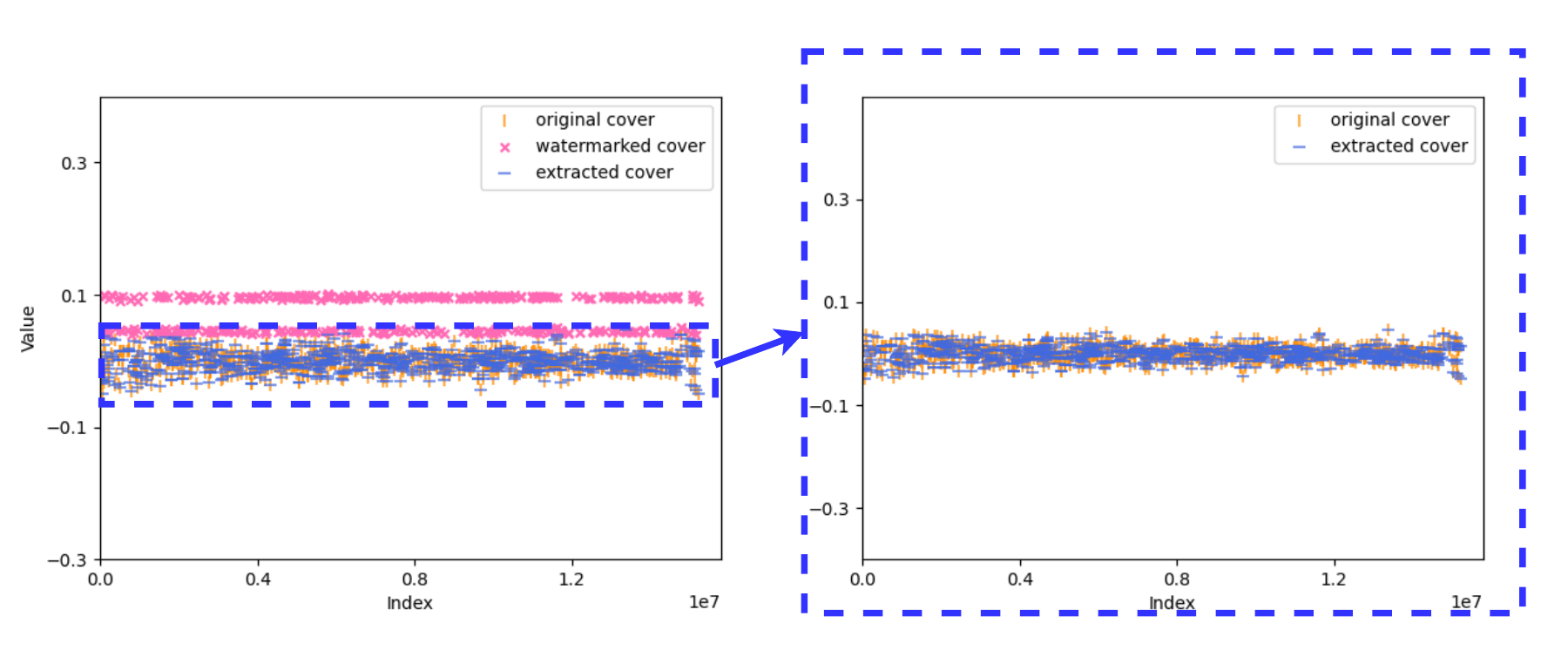}}
	\caption{The original, watermarked and recovered weights.} 
	\label{Fig. Recovering Effects2}
\end{figure}

\begin{figure}[t!]
	\centering
	\includegraphics[width=.35\textwidth]{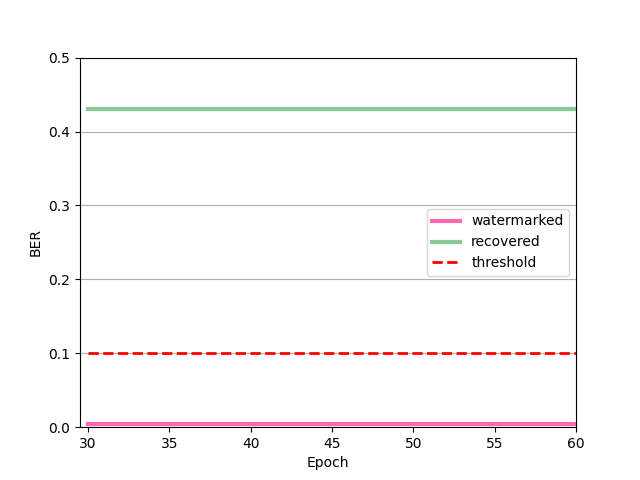}
	\caption{BER performance at different stages of recovering.} 
	\label{Fig. Recovering Effects3}
\end{figure}

\section{CONCLUSION}\label{Sec. conclusion}

In this paper, we have proposed a novel static deep neural network (DNN) watermarking scheme called Reversible QIM (R-QIM). The R-QIM scheme offers higher capacity and fidelity compared to existing methods, and it overcomes the weaknesses associated with the usability of host data under various distributions. We have also introduced two R-QIM-based schemes for integrity protection and infringement identification of DNNs.
The integrity protection scheme enables the verification of watermarked DNNs' integrity by comparing the restored model with the original model. In infringement identification, the presence of watermarks in the watermarked model can determine the legality of the current user.
Theoretical analyses and numerical simulations have demonstrated the superior performance of R-QIM compared to the method proposed in \cite{Guan_2020}. R-QIM exhibits greater flexibility in fidelity performance, higher embedding capacity, and adaptability to weights with arbitrary distributions.

In conclusion, the R-QIM scheme presents a significant advancement in DNN watermarking, offering enhanced capacity, fidelity, and applicability in various scenarios. This scheme holds promise for effective integrity protection and infringement identification of DNN models in practical applications.

%

\ifCLASSOPTIONcaptionsoff
  \newpage
\fi



%
\bibliographystyle{IEEEtranMine}
\bibliography{mybib}

\begin{thebibliography}{10}
\providecommand{\url}[1]{#1}
\csname url@samestyle\endcsname
\providecommand{\newblock}{\relax}
\providecommand{\bibinfo}[2]{#2}
\providecommand{\BIBentrySTDinterwordspacing}{\spaceskip=0pt\relax}
\providecommand{\BIBentryALTinterwordstretchfactor}{4}
\providecommand{\BIBentryALTinterwordspacing}{\spaceskip=\fontdimen2\font plus
\BIBentryALTinterwordstretchfactor\fontdimen3\font minus
  \fontdimen4\font\relax}
\providecommand{\BIBforeignlanguage}[2]{{%
\expandafter\ifx\csname l@#1\endcsname\relax
\typeout{** WARNING: IEEEtran.bst: No hyphenation pattern has been}%
\typeout{** loaded for the language `#1'. Using the pattern for}%
\typeout{** the default language instead.}%
\else
\language=\csname l@#1\endcsname
\fi
#2}}
\providecommand{\BIBdecl}{\relax}
\BIBdecl

\bibitem{montavon2018methods}
G.~Montavon, W.~Samek, and K.-R. M{\"u}ller, ``Methods for interpreting and
  understanding deep neural networks,'' \emph{Digital signal processing},
  vol.~73, pp. 1--15, 2018.

\bibitem{sze2017efficient}
V.~Sze, Y.-H. Chen, T.-J. Yang, and J.~S. Emer, ``Efficient processing of deep
  neural networks: A tutorial and survey,'' \emph{Proceedings of the IEEE},
  vol. 105, no.~12, pp. 2295--2329, 2017.

\bibitem{szegedy2013deep}
C.~Szegedy, A.~Toshev, and D.~Erhan, ``Deep neural networks for object
  detection,'' \emph{Advances in neural information processing systems},
  vol.~26, 2013.

\bibitem{samek2021explaining}
W.~Samek, G.~Montavon, S.~Lapuschkin, C.~J. Anders, and K.-R. M{\"u}ller,
  ``Explaining deep neural networks and beyond: A review of methods and
  applications,'' \emph{Proceedings of the IEEE}, vol. 109, no.~3, pp.
  247--278, 2021.

\bibitem{radhakrishnan2023wide}
A.~Radhakrishnan, M.~Belkin, and C.~Uhler, ``Wide and deep neural networks
  achieve consistency for classification,'' \emph{Proceedings of the National
  Academy of Sciences}, vol. 120, no.~14, p. e2208779120, 2023.

\bibitem{tang2021steganography}
\BIBentryALTinterwordspacing
W.~Tang, B.~Li, M.~Barni, J.~Li, and J.~Huang, ``An automatic cost learning
  framework for image steganography using deep reinforcement learning,''
  \emph{{IEEE} Trans. Inf. Forensics Secur.}, vol.~16, pp. 952--967, 2021.
\BIBentrySTDinterwordspacing

\bibitem{lou2022112333}
\BIBentryALTinterwordspacing
S.~Lou, J.~Deng, and S.~Lyu, ``Chaotic signal denoising based on simplified
  convolutional denoising auto-encoder,'' \emph{Chaos, Solitons \& Fractals},
  vol. 161, p. 112333, 2022.
\BIBentrySTDinterwordspacing

\bibitem{barni2021challenge}
\BIBentryALTinterwordspacing
M.~Barni, F.~P{\'{e}}rez{-}Gonz{\'{a}}lez, and B.~Tondi, ``{DNN} watermarking:
  Four challenges and a funeral,'' in \emph{IH{\&}MMSec '21: {ACM} Workshop on
  Information Hiding and Multimedia Security, Virtual Event, Belgium, June,
  22-25, 2021}, pp. 189--196, 2021.
\BIBentrySTDinterwordspacing

\bibitem{regazzoni2021protecting}
F.~Regazzoni, P.~Palmieri, F.~Smailbegovic, R.~Cammarota, and I.~Polian,
  ``Protecting artificial intelligence ips: a survey of watermarking and
  fingerprinting for machine learning,'' \emph{CAAI Transactions on
  Intelligence Technology}, vol.~6, no.~2, pp. 180--191, 2021.

\bibitem{li2021survey}
Y.~Li, H.~Wang, and M.~Barni, ``A survey of deep neural network watermarking
  techniques,'' \emph{Neurocomputing}, vol. 461, pp. 171--193, 2021.

\bibitem{zhang2018protecting}
J.~Zhang, Z.~Gu, J.~Jang, H.~Wu, M.~P. Stoecklin, H.~Huang, and I.~Molloy,
  ``Protecting intellectual property of deep neural networks with
  watermarking,'' in \emph{Proceedings of the 2018 on Asia Conference on
  Computer and Communications Security}, pp. 159--172, 2018.

\bibitem{uchida2017embedding}
Y.~Uchida, Y.~Nagai, S.~Sakazawa, and S.~Satoh, ``Embedding watermarks into
  deep neural networks,'' in \emph{Proceedings of the 2017 ACM on international
  conference on multimedia retrieval}, pp. 269--277, 2017.

\bibitem{kuribayashi2021white}
M.~Kuribayashi, T.~Tanaka, S.~Suzuki, T.~Yasui, and N.~Funabiki, ``White-box
  watermarking scheme for fully-connected layers in fine-tuning model,'' in
  \emph{Proceedings of the 2021 ACM Workshop on Information Hiding and
  Multimedia Security}, pp. 165--170, 2021.

\bibitem{li2021spread}
Y.~Li, B.~Tondi, and M.~Barni, ``Spread-transform dither modulation
  watermarking of deep neural network,'' \emph{Journal of Information Security
  and Applications}, vol.~63, p. 103004, 2021.

\bibitem{pagnotta2022tattooed}
\BIBentryALTinterwordspacing
G.~Pagnotta, D.~Hitaj, B.~Hitaj, F.~Perez-Cruz, and L.~V. Mancini, ``Tattooed:
  A robust deep neural network watermarking scheme based on spread-spectrum
  channel coding,'' \emph{arXiv preprint arXiv:2202.06091}, 2022.
\BIBentrySTDinterwordspacing

\bibitem{li2021payloadwatermarking}
\BIBentryALTinterwordspacing
Y.~Li, L.~Abady, H.~Wang, and M.~Barni, ``A feature-map-based large-payload
  {DNN} watermarking algorithm,'' in \emph{Digital Forensics and Watermarking -
  20th International Workshop, {IWDW} 2021, Beijing, China, November 20-22,
  2021, Revised Selected Papers}, pp. 135--148, 2021.
\BIBentrySTDinterwordspacing

\bibitem{fei2022GanWatermarking}
\BIBentryALTinterwordspacing
J.~Fei, Z.~Xia, B.~Tondi, and M.~Barni, ``Supervised {GAN} watermarking for
  intellectual property protection,'' in \emph{{IEEE} International Workshop on
  Information Forensics and Security, {WIFS} 2022, Shanghai, China, December
  12-16, 2022}, pp. 1--6, 2022.
\BIBentrySTDinterwordspacing

\bibitem{shi2016reversible}
\BIBentryALTinterwordspacing
Y.~Shi, X.~Li, X.~Zhang, H.~Wu, and B.~Ma, ``Reversible data hiding: Advances
  in the past two decades,'' \emph{{IEEE} Access}, vol.~4, pp. 3210--3237,
  2016.
\BIBentrySTDinterwordspacing

\bibitem{DBLP:journals/sigpro/HuaHSGT16}
\BIBentryALTinterwordspacing
G.~Hua, J.~Huang, Y.~Q. Shi, J.~Goh, and V.~L.~L. Thing, ``Twenty years of
  digital audio watermarking - a comprehensive review,'' \emph{Signal
  Process.}, vol. 128, pp. 222--242, 2016.
\BIBentrySTDinterwordspacing

\bibitem{1227616}
\BIBentryALTinterwordspacing
J.~Tian, ``Reversible data embedding using a difference expansion,''
  \emph{{IEEE} Trans. Circuits Syst. Video Technol.}, vol.~13, no.~8, pp.
  890--896, 2003.
\BIBentrySTDinterwordspacing

\bibitem{1421361}
D.~M. Thodi and J.~J. Rodr{\'{\i}}guez, ``Prediction-error based reversible
  watermarking,'' in \emph{Proceedings of the 2004 International Conference on
  Image Processing, {ICIP} 2004, Singapore, October 24-27, 2004}, pp.
  1549--1552, 2004.

\bibitem{WU2014387}
\BIBentryALTinterwordspacing
X.~Wu and W.~Sun, ``High-capacity reversible data hiding in encrypted images by
  prediction error,'' \emph{Signal Process.}, vol. 104, pp. 387--400, 2014.
\BIBentrySTDinterwordspacing

\bibitem{ni2006reversible}
\BIBentryALTinterwordspacing
Z.~Ni, Y.~Shi, N.~Ansari, and W.~Su, ``Reversible data hiding,'' \emph{{IEEE}
  Trans. Circuits Syst. Video Technol.}, vol.~16, no.~3, pp. 354--362, 2006.
\BIBentrySTDinterwordspacing

\bibitem{Guan_2020}
\BIBentryALTinterwordspacing
X.~Guan, H.~Feng, W.~Zhang, H.~Zhou, J.~Zhang, and N.~Yu, ``Reversible
  watermarking in deep convolutional neural networks for integrity
  authentication,'' in \emph{Proceedings of the 28th {ACM} International
  Conference on Multimedia}, oct 2020.
\BIBentrySTDinterwordspacing

\bibitem{chen2001quantization}
\BIBentryALTinterwordspacing
B.~Chen and G.~W. Wornell, ``Quantization index modulation: {A} class of
  provably good methods for digital watermarking and information embedding,''
  \emph{{IEEE} Trans. Inf. Theory}, vol.~47, no.~4, pp. 1423--1443, 2001.
\BIBentrySTDinterwordspacing

\bibitem{moulin2005data}
\BIBentryALTinterwordspacing
P.~Moulin and R.~Koetter, ``Data-hiding codes,'' \emph{Proc. {IEEE}}, vol.~93,
  no.~12, pp. 2083--2126, 2005.
\BIBentrySTDinterwordspacing

\bibitem{lyu2023optimized}
S.~Lyu, ``Optimized dithering for quantization index modulation,'' in
  \emph{ICASSP 2023-2023 IEEE International Conference on Acoustics, Speech and
  Signal Processing (ICASSP)}, pp. 1--5, 2023.

\bibitem{Feng2016}
B.~Feng, W.~Lu, W.~Sun, J.~Huang, and Y.-Q. Shi, ``Robust image watermarking
  based on tucker decomposition and adaptive-lattice quantization index
  modulation,'' \emph{Signal Processing: Image Communication}, vol.~41, pp.
  1--14, 2016.

\bibitem{qin2022lattice}
\BIBentryALTinterwordspacing
J.~Qin, S.~Lyu, J.~Deng, X.~Liang, S.~Xiang, and H.~Chen, ``A lattice-based
  embedding method for reversible audio watermarking,'' \emph{arXiv preprint
  arXiv:2209.07066}, 2022.
\BIBentrySTDinterwordspacing

\bibitem{mnist}
\BIBentryALTinterwordspacing
Y.~LeCun and C.~Cortes. "mnist handwritten digit database". [Online].
  Available: \url{http://yann.lecun.com/exdb/mnist/}
\BIBentrySTDinterwordspacing

\bibitem{krizhevsky2009learning}
A.~Krizhevsky, G.~Hinton \emph{et~al.}, ``Learning multiple layers of features
  from tiny images,'' \emph{Master's thesis, University of Tront}, 2009.

\end{thebibliography}

%
%
%
%
%

\end{document}